\documentclass[letterpaper,10pt]{article}

\usepackage[svgnames]{xcolor}
\usepackage[numbers,longnamesfirst]{natbib}

\usepackage{xstring}
\usepackage{amsmath}
\usepackage{amssymb}
\usepackage{amsthm}
\newcommand\divides{\mathbin{|}}
\usepackage{mathtools}
\usepackage[ruled,boxed,noend,linesnumbered]{algorithm2e}
\usepackage[backref=page]{hyperref}

\usepackage{microtype}
\usepackage{libertine}
\usepackage{libertinust1math}

\usepackage{tikz}
\usetikzlibrary{positioning}

\def\titl{Sparse Polynomial Interpolation and Division\\in Soft-linear
Time}
\def\pdftitle{Sparse Polynomial Interpolation and Division in Soft-linear Time}
\def\auts{Pascal Giorgi, Bruno Grenet, Armelle Perret du Cray, Daniel S. Roche}

\pdfstringdefDisableCommands{%
    \def\\{ }%
}
\hypersetup{
  pdfauthor={\auts},
  pdftitle={\pdftitle},
  colorlinks,
  linkcolor=DarkBlue,
  citecolor=DarkGreen,
  urlcolor=DarkBlue,
  bookmarksnumbered,
}

\usepackage[nameinlink,capitalise]{cleveref}

\SetAlgoVlined
\DontPrintSemicolon
\SetKwInOut{Input}{Input}
\SetKwInOut{Output}{Output}
\SetKwFor{Loop}{repeat}{}{}

\usepackage{underscore}

\newcommand{\arXiv}[1]{arXiv: \href{https://arxiv.org/abs/#1}{#1}}
\newcommand{\HAL}[1]{\textsc{hal}: \href{https://hal.archives-ouvertes.fr/#1}{#1}}
\newcommand{\URL}[1]{\textsc{url}: \href{#1}{#1}}

\newtheorem{theorem}{Theorem}[section]
\newtheorem{fact}[theorem]{Fact}
\newtheorem{lemma}[theorem]{Lemma}

\newtheorem{definition}[theorem]{Definition}

\crefname{fact}{Fact}{Facts}

\numberwithin{equation}{section}

\newcommand{\ZZ}{\ensuremath{\mathbb{Z}}}
\newcommand{\NN}{\ensuremath{\mathbb{N}}}
\newcommand{\GF}[1]{\ensuremath{\mathbb{F}_{#1}}}

\newcommand{\bnd}[2]{\ensuremath{#1\mathopen{}\left(#2\right)\mathclose{}}}
\newcommand{\oh}[1]{\bnd{O}{#1}}
\newcommand{\softoh}[1]{\bnd{\widetilde{O}}{#1}}
\newcommand{\ceil}[1]{\ensuremath{\left\lceil#1\right\rceil}}
\newcommand{\floor}[1]{\ensuremath{\left\lfloor#1\right\rfloor}}

\newcommand{\ee}{\ensuremath{\mathbf{e}}}
\newcommand{\xx}{\ensuremath{\mathbf{x}}}
\newcommand{\poly}{\ensuremath{\mathsf{poly}}}
\newcommand{\polylog}{\ensuremath{\mathsf{polylog}}}

\newcommand{\true}{\textsc{true}\xspace}
\newcommand{\false}{\textsc{false}\xspace}
\newcommand{\fail}{\textsc{fail}\xspace}

\def\mcinterpolate{\textsc{Interpolate\_mbb}\xspace}
\def\divisionWithSparsity{\textsc{Bounded\_sparsity\_division}\xspace}
\def\exactDivision{\textsc{Exact\_division}\xspace}

\renewcommand{\backref}[1]{Referenced on %
  \expandarg\StrCount{#1}{,}[\ncommas]%
  \ifthenelse{\ncommas = 0}{page~#1}%
  {pages~\StrBefore[\ncommas]{#1}{,}\ and\StrBehind[\ncommas]{#1}{,}}%
.}

\newcommand\auth[4]{%
  \begin{minipage}{.45\textwidth}%
  \centering
      #1\\%
      \normalsize
      #2\\%
      #3\\%
      #4
  \end{minipage}%
}

\title{\titl}
\author{%
\auth{Pascal Giorgi}{LIRMM, Univ. Montpellier, CNRS}{Montpellier, France}{pascal.giorgi@lirmm.fr}
\auth{Bruno Grenet}{LIRMM, Univ. Montpellier, CNRS}{Montpellier, France}{bruno.grenet@lirmm.fr}
\\[3em]
\auth{Armelle Perret du Cray}{LIRMM, Univ. Montpellier, CNRS}{Montpellier, France}{armelle.perret-du-cray@lirmm.fr}
\auth{Daniel S. Roche}{United States Naval Academy}{Annapolis, Maryland, U.S.A}{roche@usna.edu}
}

\begin{document}
\maketitle

\begin{abstract}
  Given a way to evaluate an unknown polynomial with integer
  coefficients, we present new
  algorithms to recover its nonzero coefficients and corresponding
  exponents.
  As an application, we adapt this interpolation algorithm to the
  problem of computing the exact quotient of two given polynomials.
  These methods are efficient in terms of the bit-length of
  the sparse representation, that is,
  the number of nonzero terms, the size of
  coefficients, the number of variables, and the logarithm of the
  degree.
  At the core of our results is a new Monte Carlo randomized algorithm
  to recover a polynomial $f(x)$ with integer coefficients given a way to evaluate
  $f(\theta) \bmod m$ for any chosen integers $\theta$ and $m$.
  This algorithm has nearly-optimal bit complexity, meaning that the
  total bit-length of the probes, as well as the computational running
  time, is softly linear (ignoring logarithmic factors)
  in the bit-length of the resulting sparse polynomial.
  To our knowledge, this is the first sparse interpolation algorithm
  with soft-linear bit complexity in the total output size. For polynomials
  with integer coefficients, the best previously known results
  have at least a cubic dependency on the bit-length of the exponents.
\end{abstract}

\section{Introduction}

\paragraph{Sparse and supersparse polynomials.}

Sparse polynomial interpolation is an important and well-studied problem
in computer algebra, with numerous connections to related problems in
signal processing and coding theory.
In our context, the task is to determine the \emph{sparse
representation} of an unknown polynomial $f\in\ZZ[x_1,\ldots,x_n]$,
which is the list of
nonzero coefficients $c_1,\ldots,c_t\in\ZZ$ and corresponding exponent
tuples
$\ee_1,\ldots,\ee_t\in\NN^n$ such that
\[f = c_1 \xx^{\ee_1} + c_2 \xx^{\ee_2} + \cdots + c_t \xx^{\ee_t}.\]
Here we use the convenient notation for each monomial
\[\xx^{\ee_i} = x_1^{e_{i,1}}x_2^{e_{i,2}}\cdots x_n^{e_{i,n}}.\]
We assume every $c_i\ne 0$ and all the $\ee_i$'s are distinct.
The number of nonzero terms in $f$, also known as the \emph{sparsity},
is written as $t = \#f$.
The bit size of the sparse representation of $f$ is $t(n\log D+\log H)$%
\footnote{Unless otherwise stated, logarithms are in base $2$; We shall also use 
base-$q$ logarithms for some prime $q$, and natural logarithms for prime-related
statements.}
with $D$ the \emph{max degree} of $f$, that is the largest exponent $e_{i,j}$,
and $H$ its \emph{height}, that is the maximum magnitude of a
coefficient%
\footnote{In this work, we do not consider the case of \emph{unbalanced}
bit lengths, where the differing sizes of each coefficient and exponent
are considered in the complexity.}.

Any sparse interpolation algorithm requires some bounds on the unknown $f$
(typically on the degree, size of coefficients, and possibly number of nonzero
terms), as well as a way to evaluate $f$. The algorithm constructs a
series of evaluation points, performs said evaluations, then
performs some computations, possibly iterating these
steps before settling on the final result.

Dense polynomial interpolation algorithms have been known for centuries
and can always recover a unique result, even if the evaluation points
are not chosen by the algorithm.
However, methods such as Lagrange interpolation
scale at least linearly with the \emph{degree} of the unknown
polynomial.
Sparse polynomial algorithms, by contrast, should scale
according to the number of nonzero terms, which in general can be much
smaller than the degree.

In fact, the degree could be
\emph{exponentially larger} than the sparse representation.
Algorithms whose cost
scales with the bit-length of the exponents, i.e., the logarithm of the
degree, are called \emph{supersparse}
or \emph{lacunary} polynomial algorithms.

\paragraph{Sparse interpolation}

Sparse interpolation has received much attention since the landmark paper
by Ben-Or and Tiwari \cite{BenorTiwari:1988}, which provides a deterministic
algorithm of complexity polynomial in $T$, $D$, $n$ for multivariate polynomials
over $\ZZ$, given a bound on $T\ge t$ as input.
This algorithm is given in the context of an unknown polynomial that a black box
allows to evaluate at any point of $\ZZ$ freely chosen by the algorithm.
Numerous extensions have been proposed~\cite{zippel90,KaltofenLakshman:1988,HuangGao19}, in
particular in order to: deal with finite fields \cite{grigoriev90,HuangRao:1999,javadi10,giesbrecht11,huang21},
avoid the bound on $t$ by \emph{early termination} techniques \cite{kl03} or extend 
the problem to the case of sparse rational functions 
\cite{KaltofenYang2007,KaltofenNehring2011,cuyt11,vdHLec2021}.
Some algorithms require the black box model to be slightly relaxed 
and allow evaluations in extension rings or quotient rings
\cite{grigoriev90,mansour95,alon95,murao96,KaltofenNehring2011,giesbrecht11,Blaser14,hl15}.

\citet{gs09} described the first algorithm for a generic ring whose
complexity is polynomial in $\log D$ (\emph{supersparse} interpolation).
Their algorithm takes as input a
\emph{straight-line program} (SLP) rather than a black box. Hence, there is no 
restriction on the evaluation domain, but the evaluation cost has to be taken into account.
Subsequent works have refined the complexity bounds of this algorithm when the ring
of coefficients is a finite field, the ring of integers or rational numbers 
\cite{ArGiRo14,agr15,HuangGao20,huang20}. The best currently known complexity
is due to Huang \cite{Huang:2019} for the interpolation of an SLP of length $L$ 
on a finite field $\GF q$ of large characteristic in $\softoh{LT \log D \log q}$ bit operations.
This complexity is however not quasi-linear in the output size due to the factor $\log D$ times $\log q$.

More details on algorithms and techniques are given in Arnold's Thesis \cite{arn16} or in the survey from van der Hoeven
and Lecerf \cite{vdHLec2019}.

In unbounded coefficient domains such as $\ZZ$,
the bit size of the values involved in the evaluation 
and computation can grow exponentially. Working with
such exponential-size integers is unrealistic and may
even make the problem trivial: the unknown polynomial $f$
can be recovered from a \emph{single evaluation} at a point 
larger than any coefficient, using the $q$-adic expansion
of the result. Hence, modular techniques are needed to get
efficient algorithms~\cite{kaltofen90,hl15}.
This motivated the definition of more general black boxes that enable to perform evaluation modulo a chosen integer $m$.

\begin{definition}\label{def:mbb}
  A \emph{modular black box} (MBB, for short) for a multivariate
  polynomial $f\in\ZZ[x_1,\ldots,x_n]$ is a function that takes any
  modulus $m\in\NN$ and $n$-tuple of evaluation points
  $(\theta_1,\ldots,\theta_n)\in \{0,1,\ldots,m-1\}^n$, and produces
  the evaluation $f(\theta_1,\ldots,\theta_n) \bmod m$.
\begin{center}
\begin{tikzpicture}
\node (box) at (0,0) [color=white,fill=black,draw,very thick,minimum
width=2cm,minimum height=1cm] {$f$};
\node (a) [left=.5cm of box] {$\theta_1,\dots,\theta_n$};
\node (m) [above=.5cm of box] {$m$};
\node (f) [right=.5cm of box] {$f(\theta_1,\dots,\theta_n)\bmod m$};
\draw[thick,->] (a) -- (box);
\draw[thick,->] (m) -- (box);
\draw[thick,->] (box) -- (f);
\end{tikzpicture}
\end{center}
\end{definition}

An alternative input
for sparse interpolation
is straight-line
programs (SLP). 
An SLP naturally implements an MBB: Given the SLP for $f\in\ZZ[x_1,\ldots,x_n]$, one
can compute $f(\theta_1,\ldots,\theta_n)\bmod m$. If the SLP has length $L$, this amounts to $\oh{L}$
operations in $\ZZ/m\ZZ$, or $\softoh{L(\log m+\log H)}$ bit operations, where $H$ bounds
the absolute values of the constants used by the SLP. (More precisely, if the SLP uses $k$
constants $\le H$ in absolute value, and $H > m$, we need to reduce these $k$ integers
modulo $m$, in time $\softoh{k\log H}$.)

A fair analysis of a sparse interpolation algorithm over $\ZZ[\xx]$
should therefore consider four things: (1) the number of evaluations,
(2) the bit-length of these evaluations, (3) the arithmetic complexity
of extra processing to produce the result, and (4) the bit-length of
integers involved in the extra processing. 

\paragraph{Sparse polynomial exact division}
Another issue with sparse polynomials is the complexity of the basic 
arithmetic operations; see 
the survey of \citet{Roche2018}. 
Even for standard operations such as multiplication or division, no deterministic
quasi-linear time algorithm is known. In spite of some theoretical improvements and
practical implementations, deterministic algorithms for these operations
remain quadratic in the sparsity \cite{Johnson74,monagan07,Monagan09,monagan11,gastineau15}.
The major difficulty comes from the unpredictability of the sparsity of the result.
Quite recently, new probabilistic algorithms for sparse polynomial multiplication
have been proposed~\cite{ar15, Nakos:2020, vdH2020}. 
This led to the first quasi-linear algorithm for sparse polynomial multiplication
over the integers or finite fields with large characteristic~\cite{ggp20}, based
on sparse interpolation and sparse polynomial verification~\cite{ggp22}. 

For the Euclidean division of sparse polynomials, the case of exact division 
(when the remainder is known to be zero)
was improved by similar techniques \cite{ggp21}. This led to the first algorithm
that is quasi-linear in the sparsity, though not in the total bit size. 

\subsection{Summary of results}

We provide the first truly quasi-linear sparse interpolation algorithm, for integer polynomials.

\begin{theorem}\label{thm:Interp}
  There is a Monte Carlo randomized algorithm that,
  given an MBB for an unknown polynomial $f\in\ZZ[x_1,\ldots,x_n]$ and
    bounds $D$, $H$, and $T$ on respectively its max degree, height and sparsity,
        recovers the sparse representation of $f$
    with probability at least $\tfrac{2}{3}$.
    It requires $\oh T$ probes to the MBB plus $\softoh{T(n\log D+\log H)}$ 
    bit operations.
\end{theorem}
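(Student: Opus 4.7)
The plan is to reduce the multivariate problem to the univariate case and then apply a Prony-style interpolation modulo a carefully chosen prime. First, use the Kronecker substitution $x_i \mapsto y^{D^{i-1}}$ to define $g(y) := f(y, y^D, \ldots, y^{D^{n-1}}) \in \ZZ[y]$. Since every $e_{i,j} < D$, the induced map on exponent tuples is injective, so $g$ has sparsity exactly $T$, the same coefficient multiset as $f$, and degree less than $D^n$. A univariate probe $g(\theta)\bmod m$ is implemented by one multivariate MBB call at $(\theta,\theta^D,\ldots,\theta^{D^{n-1}})\bmod m$, with $\softoh{n\log D \cdot \log m}$ overhead for the repeated-squaring powers. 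The target complexity $\softoh{T(n\log D + \log H)}$ is preserved by this reduction, and multivariate exponent tuples are recovered by reading base-$D$ digits of their univariate counterparts.

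For the univariate step, I would draw a random prime $p$ of size roughly $T^2 n\log D$, large enough that with probability at least $5/6$ the $T$ exponents of $g$ fall in distinct residue classes modulo $p$. I would then locate a prime $q \equiv 1 \pmod p$ of bit-length $\oh{\log p}$, so that $\GF{q}^*$ contains an element $\omega$ of order $p$. Using the univariate MBB, compute $g(\omega^k) \bmod q$ for $k = 0,\ldots,2T-1$, giving $\oh T$ probes. A fast Berlekamp--Massey procedure then produces the locator polynomial $\Lambda(z) = \prod_i (z - \omega^{e_i \bmod p})$; its roots identify each $e_i \bmod p$ via a discrete logarithm in $\langle \omega\rangle$, and coefficients are recovered from a transposed Vandermonde solve in $\softoh{T \log q}$ bit operations. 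Full-range exponents (up to $D^n$) and integer coefficients (up to height $H$) are obtained by CRT across several primes, so that $\prod p_i > D^n$ and $\prod q_j > 2H$. The candidate polynomial is then checked against the MBB using the sparse-polynomial equality test of~\cite{ggp22}, with a restart on failure, boosting the success probability to at least $2/3$.

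The main obstacle is keeping every sub-step soft-linear in the output size $T(n\log D + \log H)$. The $\oh T$ evaluations use modulus $q$ of bit-length $\Omega(n\log D)$, so they just fit the budget; similarly, fast half-GCD Berlekamp--Massey over $\GF{q}$ costs $\softoh{T\log q} = \softoh{T\,n\log D}$. The delicate part is extracting exponents from the roots of $\Lambda$: naive baby-step/giant-step costs $\softoh{\sqrt p}$ per root, which is exponential in the output size. Either $p-1$ must be chosen smooth (so Pohlig--Hellman gives near-linear discrete logarithms) or the exponent-recovery problem is split into many smaller primes via CRT so that each discrete log becomes a cheap table lookup. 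Satisfying simultaneously the separation property of $p$, the divisibility $p \mid q-1$, the smoothness of $p-1$, and the existence and efficient certification of such $p,q$, while keeping the total bit cost quasi-linear, is the technical heart of the argument. The remaining work is bookkeeping: controlling failure probabilities via a union bound over the $\oh{\log H /\log\log H}$ prime choices, and amortizing the Kronecker-powering overhead across probes.
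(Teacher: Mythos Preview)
Your Kronecker reduction and the basic Prony/Ben-Or--Tiwari skeleton over $\GF{q}$ with $p\mid q-1$ match the paper. But two of the paper's key ideas are missing from your plan, and without them the stated bounds are not achieved.

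First, you take $p$ of magnitude $\Theta(T^2 n\log D)$ so that \emph{all} exponents separate modulo $p$. This is precisely what makes your root/discrete-log step hard: the only generic way to read off the exponents from the roots of $\Lambda$ that stays within budget is to evaluate $\Lambda$ at all $p$ powers of $\omega$ (Bluestein's chirp transform, $\softoh{p}$ operations in $\GF{q}$), and with your $p$ this is $\softoh{T^2 n\log D}$, quadratic in $T$. The paper instead takes a \emph{tiny} prime $p=\Theta(T\,n\log D)$ and accepts that up to half the terms may collide modulo $p$; the chirp transform then costs $\softoh{T n\log D}$ bit operations, which fits. Colliding terms are handled by subtracting off the recovered (collision-free) terms and iterating with $T\gets\lfloor T/2\rfloor$; after $O(\log T)$ rounds everything is found, and the total probe count telescopes to $O(T)$. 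Your proposed fixes (smooth $p-1$, or splitting into many small primes) are not developed and each creates new difficulties.

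Second, your CRT plan (``$\prod p_i > D^n$ and $\prod q_j > 2H$'') violates the $O(T)$ probe bound: you would need $\Theta\bigl(\tfrac{n\log D}{\log p}\bigr)$ distinct primes $p_i$, each requiring $\Theta(T)$ MBB probes, for a total of $\widetilde{\Theta}(T n\log D)$ probes rather than $O(T)$. The paper avoids CRT entirely via the exponent-embedding trick of Arnold--Roche: choose $k$ with $q^k>\max(D^n,2H)$, lift $\omega$ to a $p$-PRU $\omega_k\in\ZZ/q^{2k}\ZZ$ by Newton iteration, and interpolate both $f(x)$ and $f((1+q^k)x)$ modulo $\langle x^p-1,\,q^{2k}\rangle$ with known support (a transposed-Vandermonde solve, $\softoh{T}$ ring operations). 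Because $(1+q^k)^{e}\equiv 1+e q^k\pmod{q^{2k}}$, the ratio of matching coefficients reveals the full exponent $e$ and the integer coefficient directly, with only $O(T)$ probes in this one large ring. This single trick replaces both the discrete logarithm you flag as delicate and the CRT over many moduli. Finally, the sparse equality test you invoke at the end is not needed for the Monte Carlo statement, and the paper notes that the known verification procedures are themselves \emph{not} soft-linear (they incur a $\log^2 D$ factor), so appealing to one would spoil the complexity you are trying to establish.
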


Based on similar techniques, we are also able to provide the first quasi-linear time algorithm for computing
the exact quotient of two sparse polynomials.

\begin{theorem}\label{thm:Div}
        There is a Monte Carlo randomized algorithm that,
    given two sparse polynomials $f$, $g\in\ZZ[x_1,\ldots,x_n]$
    such that $g$ divides $f$ and a bound $T$ on the sparsity of the quotient $f/g$,
        computes the
    sparse representation of $f/g$ with probability at least $\tfrac{2}{3}$.
    It requires $\softoh{(T+\#f+\#g)(n\log D+\log H)}$
    bit operations where $D=\deg(f)$, and $H$ is a bound on the height 
    of the three polynomials $f,g$ and $f/g$.
\end{theorem}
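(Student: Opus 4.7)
The plan is to reduce to \cref{thm:Interp} by building a modular black box (MBB) for $q=f/g$ out of the sparse representations of $f$ and $g$. On input $(\theta,m)$, the MBB returns $f(\theta)\cdot g(\theta)^{-1}\bmod m$ whenever $g(\theta)$ is a unit modulo $m$, and reports failure otherwise. Choosing $m$ as a random prime of appropriate bit-length $\softoh{n\log D+\log H}$, the probability that $g(\theta)\equiv 0\pmod m$ can be made small by standard arguments bounding the number of prime divisors of $g(\theta)$, and constant-factor amplification brings the overall success probability above $\tfrac{2}{3}$.

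The crux of the proof is efficiency. Naively answering a single probe $(\theta,m)$ via independent sparse evaluation of $f$ and $g$ already costs $\softoh{(\#f+\#g)\cdot n\log D\cdot\log m}$, yielding a total complexity multiplicative in $T\cdot(\#f+\#g)$, which is too large. To hit the target bound, we batch all the probe evaluations. We first apply Kronecker substitution to reduce to the univariate setting, then choose the prime $m$ so that $m-1$ admits a smooth factor of size $N=\Theta(T)$, together with a primitive $N$-th root of unity $\omega$ modulo $m$. All probes are taken at powers $\omega,\omega^2,\ldots,\omega^{O(T)}$. Because $\omega^E=\omega^{E\bmod N}$ for every Kronecker exponent $E$, folding the univariate image of $f$ modulo $y^N-1$ produces a polynomial of degree below $N$ with at most $\#f$ terms, whose values at all $N$-th roots of unity are then obtained by a single FFT of length $N$ over $\ZZ/m\ZZ$. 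The whole batched evaluation of $f$ thus costs $\softoh{(T+\#f)(n\log D+\log H)}$ bit operations, dominated by the $\#f$ exponent reductions modulo $N$ and the FFT itself. The same bound holds for $g$.

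With the tables of $f(\omega^k)$ and $g(\omega^k)$ precomputed, the algorithm (i) invokes \cref{thm:Interp} on an MBB for $q$ that answers each of its $O(T)$ probes by a table lookup, one modular inversion, and one modular multiplication (cost $\softoh{n\log D+\log H}$ per probe); then (ii) returns the recovered sparse polynomial, which has sparsity at most $T$, max degree at most $D$, and height at most $H$ as required by the hypotheses of \cref{thm:Interp}. Summing the preprocessing, the MBB answers, and the internal cost of \cref{thm:Interp} yields the claimed total $\softoh{(T+\#f+\#g)(n\log D+\log H)}$.

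The main obstacle will be to ensure that the probes of \cref{thm:Interp} can be arranged in the structured form $\{\omega^k\}$ under a single modulus $m$; this requires inspecting the probe pattern produced inside the algorithm of \cref{thm:Interp} and, if necessary, adapting it to be compatible with this batched-evaluation scheme. A secondary technicality is controlling the small fraction of $k$ for which $g(\omega^k)$ accidentally vanishes modulo $m$: taking $N$ a constant factor larger than $T$ ensures that enough valid probes remain, since $g\bmod m$ has at most $\deg g \le D^n\ll m$ roots and the $\omega^k$ are generic points of $(\ZZ/m\ZZ)^{\ast}$.
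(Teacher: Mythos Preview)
Your proposal identifies the right high-level idea---evaluate $f$ and $g$ at the same structured points and divide---but it leaves the central difficulty unresolved and mishandles the zero-denominator issue.

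First, invoking \cref{thm:Interp} as a black box on a precomputed table for a \emph{single} modulus $m$ and a \emph{single} root $\omega$ cannot work. The algorithm behind \cref{thm:Interp} (\mcinterpolate) probes the MBB with \emph{several different} moduli: in each of its $\lceil\log T\rceil$ iterations it draws a fresh random prime $p$, a fresh prime $q$ with $p\mid(q-1)$, and evaluates both over the tiny field $\GF q$ (for exponent recovery via Berlekamp--Massey) and over the large ring $\ZZ/q^{2k}\ZZ$ (for coefficient recovery). Your precomputed table under one fixed $m$ answers none of these probes. You acknowledge this as ``the main obstacle'' and propose to ``inspect the probe pattern \ldots and, if necessary, adapt it''---but that adaptation \emph{is} the proof, and you have not carried it out. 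The paper does exactly this: it opens up \mcinterpolate and, at each iteration, evaluates $f$ and $g$ directly at the powers of the current $p$-PRU using \cref{fact:vandermonde}, which already gives the additive $(\#f+\#g)$ term you are after without any separate FFT preprocessing.

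Second, your treatment of $g(\omega^k)=0$ is incorrect. After Kronecker substitution, $\deg g$ can be as large as $D^n$, so the bound ``$g\bmod m$ has at most $\deg g$ roots'' says nothing about how many of your $N=\Theta(T)$ roots of unity are killed; all of them could be. Moreover, the Prony step needs the full consecutive block $1,\omega,\ldots,\omega^{2T-1}$, not merely ``enough'' surviving probes. The paper handles this by choosing $p>\#g$ so that $g\bmod(x^p-1)\neq 0$ with high probability (\cref{fact:collisions}), hence $g$ is coprime to $\Phi_p$ over $\ZZ$, and then choosing $q$ not dividing the resultant of $g$ and $\Phi_p$ (bounded by $(\#g\,H_g)^{p-1}$) so that coprimality persists in $\GF q$ and in $\ZZ/q^{2k}\ZZ$. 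This guarantees that \emph{no} power of the $p$-PRU is a root of $g$.
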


Our algorithms are randomized of the Monte Carlo type, meaning
that they can return incorrect results. By repeatedly running the
algorithms and taking the majority result, the probability of error
decreases exponentially in the number of iterations.

The exact division algorithm can be performed without an \emph{a priori} sparsity bound. For that, we rely on
the sparse product verification algorithm of \citet{ggp20,ggp22}. It becomes an
Atlantic City algorithm (both its correctness and running time are
probabilistic) since the verification algorithm is  randomized of Monte
Carlo type.

We present our results for multivariate polynomials but will focus on
univariate polynomials in our descriptions and proofs that follow.
This is allowed by the fairly classical
Kronecker substitution \cite{Kronecker1882,kal10a}.
Indeed, there is a one-to-one correspondence between polynomials $f\in
\ZZ[x_1,\dots,x_n]$ with $\deg_{x_i} f <D$, and univariate polynomials in
$\ZZ[x]$ of degree $<D^{n}$ through the transformation $f_u(x) =
f(x,x^D,x^{D^2},\dots,x^{D^{n-1}})$. Note that Kronecker substitution preserves
the bit size of the polynomials.
For sparse polynomials, the transformation and its inverse require
$\softoh{Tn\log D}$ bit operations.
An MBB for $f$ can simulate a univariate MBB 
for $f_u$ by evaluating $f$ at the powers of the given point. This
adds a negligible cost
in our algorithms since we probe the MBB
on points of known low order.

The rest of the paper is then devoted to univariate polynomials. By abuse of
notation we still use $D$ to denote the degree of the univariate polynomial,
instead of $D^n$.

\subsection{Main ideas}

Our new algorithms mostly combine aspects of existing techniques initiated by the work of \citet{gs09} and \citet{BenorTiwari:1988}
plus a few new techniques. We outline the most important of them to give a broad overview of the main interpolation
algorithms.

\paragraph{Finding candidate exponents}

Like in the recent line of work of Gao and Huang \cite{Huang:2019,HuangGao20,huang20,huang21}, our overall approach is to generate
\emph{candidate} terms of the unknown sparse polynomials $f$. This is achieved by interpolating $f\bmod x^p-1$ for \emph{tiny}
primes $p$, where $p \in O(T\log D)$ is so small that even performing $\softoh{p}$ operations is allowable within the targeted
complexity.

This approach originates in the work of \citet{gs09} on SLP. In that and subsequent works, the polynomial reduced
modulo $x^p-1$ is explicitly computed using dense arithmetic. This step alone is too costly to get a quasi-linear
complexity.

Our approach is to instead compute $f\bmod x^p-1$ using sparse interpolation \emph{à la} Prony. To this end, we have to
evaluate $f$ on elements of order $p$. If $\omega$ is the generator of an order-$p$ subgroup of $\GF q$,
then $f(\omega) = (f\bmod x^p-1)(\omega)$. This allows us to recover the polynomial $f$ modulo $\langle x^p-1,q\rangle$. If $\GF q$ is a
small field, namely $q \in \poly(p)$, this Prony-based interpolation has quasi-linear cost.
Since $q$ is rather small, this actually only provides the exponents modulo $p$ of $f$, but almost no information on the coefficients.

To recover the values of the coefficients, we need to work in a ring $\ZZ/m\ZZ$ for some large modulus $m$. A full Prony-based
sparse interpolation over that ring would be too expensive. However, the exponents of $f\bmod x^p-1$ have already been computed
and we only need to perform the second part of the algorithm, namely sparse interpolation with known support. Also we cannot
afford to compute a large enough prime number $m$. Instead, we work over a prime power modulus, namely $m = q^k$ for some
$k$. This part can still be done in quasi-linear time, even in this larger ring, since it amounts to solving structured linear
system of size $\oh{\#f}$.

There, we can only ensure a good probability that one-half of the terms do not collide in the reduction modulo $x^p-1$. As
proposed by \citet{Huang:2019} this can be easily turned into a Monte Carlo algorithm by doing $\oh{\log T}$ interpolations with
different primes $p$. A second problem is that, from this step, we learn only the exponents modulo $p$ and not the full exponents
themselves.  Here we can rely on the clever idea of embedding the exponents in the coefficients 
\cite{hl15,ar15,Huang:2019}.
The approach of \citet{Huang:2019} is to use the derivative for that purpose. This is well adapted for SLP since the derivative
can be computed by means of automatic differentiation. A more general way that encompasses the MBB, reminiscent of
Paillier encryption scheme~\cite{Paillier:1999}, has been proposed by \citet{ar15}. Given a modulus $m$, they consider both 
polynomials \(f(x)\) and \(f((1+m)x)\) in the ring \(\ZZ/m^2\ZZ\). Because of the identity \((1+m)^{e_i}\bmod m^2 = 1 + e_i m\),
the ratio of corresponding coefficients between these two polynomials reveals each exponent \(e_i\) modulo \(m^2\), provided that
term did not collide with any others. In our case, the modulus $m$ is $q^k$ and we actually perform the second part of the
Prony-based interpolation algorithm over $\ZZ/q^{2k}\ZZ$ to compute both $f(x)$ and $f((1+q^k)x)$ modulo $\langle x^p-1,q^{2k}\rangle$.

\paragraph{Finding rings with specified subgroups}
Our approach crucially relies on the ability of finding prime numbers $p,q$ and elements $\omega$ and $\omega_k$ such that $\omega$ and
$\omega_k$ are generators of order-$p$ subgroups of respectively \(\GF q\) and \(\ZZ/q^{2k}\ZZ\).
In particular, $p$ must divide $q-1$.  Effective versions of Dirichlet's theorem on primes in arithmetic progressions tell us
that, for a prime $p$, we can (usually) find another prime $q$ such that $p\divides (q-1)$, where $q \le O(p^6)$ is not too much
larger than $p$, see \cite{Rou85}.  This allows us to choose $q$ as a prime in the arithmetic progression $\{ap+1:a\ge1\}$ and to
set $\omega=\zeta^{(q-1)/p}$ for a random $\zeta \in \GF{q}$.
Furthermore,
one can easily construct an element $\omega_k$ of order $p$ in $\ZZ/q^{2k}\ZZ$ by lifting $\omega$ through Newton iteration. We
also demonstrate that $\omega_k$ is principal, which is a necessary condition to be able to solve our structured linear system
which is of transposed Vandermonde type.

Notice that changing the base ring is mandatory to minimize the bit complexity. Namely, the large rings have a modulus with
$O(\log D+\log H)$ bits, but we only do $\softoh{T}$ arithmetic operations in such rings. The tiny fields, by contrast, have a
modulus of only $\oh{\log(T\log DH)}$ bits, but require at most $\softoh{T\log D}$ operations.

\paragraph{Exact division}
To compute the quotient of two sparse polynomials $f$ and $g$ such that $g$
divides $f$, we adapt our interpolation techniques. To allow the evaluation of $f/g$
by evaluating both $f$ and $g$, we slightly change the values of $p$ and $q$ and ensure
that $\omega$, $\omega_k$ and their powers are not roots of $g$. The values of $p$ and $q$
do not grow too much: $p$ remains linear in the input plus the output bit size,
and $q$ polynomial in $p$. Since the height and sparsity of $f/g$ are unknown, we
must discover them during the computation. The idea is to begin with small bounds
for both and increase them when needed. For this we rely on sparse polynomial product and
modular product verification~\cite{ggp20,ggp22}. A delicate aspect is to intertwine both
bound increases.

\subsection{Outline of the paper}

We start with a preliminary section that gives few number theoretic results that are needed to prove the correctness of our
algorithms.

\cref{{sec:mc}} provides our softly linear interpolation algorithm extending further the main idea described above.  This
interpolation algorithm is re-used in~\cref{sec:divis} to provide a similar algorithm for the computation of the exact quotient of
two sparse polynomials. Moreover, we will present an unconditional algorithm that does not require any prior knowledge of the
quotient, and which has an expected softly linear running time.

\section{Number-theoretic preliminaries}\label{sec:prelim}

Our algorithms use number-theoretic results that are for many of them quite standard in the sparse interpolation literature. We
recall them in this section, in the specific form required for our proofs.  One slightly less common routine consists in computing
a primitive root of unity (PRU) of prime order $p$ in a ring $\ZZ/q^k\ZZ$ where $q = ap+1$ is also a prime number. We show how to
use Newton iteration for this purpose.

\subsection{Prime number generation}
Our algorithm first computes $f\bmod x^p-1$ where $f$ is the polynomial to be interpolated, and $p$ some random prime number.  The
goal is that not too many exponents of $f$ collide modulo $p$ to be able to recover the terms of $f$. We use a result of Arnold
and Roche~\cite{ar15}.  Note that similar results are given in other references \cite{ArGiRo14,HuangGao20}.

\begin{fact}[{\cite[Lemma~3.4]{ar15}}]\label{fact:collisions}
  Let $f$ be a $T$-sparse degree-$D$ univariate polynomial, and $p$ be a random
  prime number in $(\lambda,2\lambda)$ where $\lambda \ge
  \frac{5}{3\epsilon(1-\gamma)} (T-1)\ln D$ for some $\gamma$ and $\epsilon$.
  Then $f\bmod x^p-1$ has at least $\gamma T$ collision-free terms with
  probability at least $1-\epsilon$.
\end{fact}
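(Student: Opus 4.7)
The plan is to bound the expected number of colliding exponent pairs modulo $p$ and then apply Markov's inequality; the $5/3$ factor in the hypothesis comes from an explicit Chebyshev-type lower bound on the number of primes in a dyadic interval.

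First I would enumerate the $\binom{T}{2}$ pairs of distinct exponents $(e_i,e_j)$ of $f$. A pair is a \emph{collision} modulo $p$ exactly when $p \divides (e_i - e_j)$. Since each difference is a nonzero integer of absolute value at most $D$, it has at most $\log_{\lambda} D = \ln D / \ln \lambda$ prime divisors in the range $(\lambda,2\lambda)$, as any such divisor exceeds $\lambda$.

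Next I would invoke an effective lower bound on the prime-counting function, namely $\pi(2\lambda) - \pi(\lambda) \geq \tfrac{3}{5}\lambda/\ln\lambda$ for $\lambda$ large enough (a standard Rosser--Schoenfeld-style estimate). Dividing the number of bad primes per pair by the number of primes available gives, for any fixed pair, the probability of collision
\[
\Pr_p\bigl[p \divides (e_i-e_j)\bigr] \;\leq\; \frac{\ln D / \ln \lambda}{(3/5)\,\lambda/\ln\lambda} \;=\; \frac{5\ln D}{3\lambda}.
\]
By linearity of expectation, the expected number of colliding pairs is at most $\binom{T}{2}\cdot \tfrac{5\ln D}{3\lambda}$. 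A term is \emph{not} collision-free only if it participates in at least one collision, so the number of non-collision-free terms is at most twice the number of colliding pairs, giving expectation at most $T(T-1)\cdot \tfrac{5\ln D}{3\lambda}$.

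Finally I would apply Markov's inequality to conclude that with probability at least $1-\epsilon$, the number of non-collision-free terms is at most $\tfrac{1}{\epsilon}\cdot T(T-1)\cdot \tfrac{5\ln D}{3\lambda}$. Plugging in the hypothesis $\lambda \geq \tfrac{5(T-1)\ln D}{3\epsilon(1-\gamma)}$ makes this bound at most $(1-\gamma)T$, so at least $\gamma T$ terms are collision-free with the desired probability. The main delicate point is choosing the right Chebyshev-style constant so that the constants align with the $5/3$ stated in the hypothesis; the combinatorial counting and Markov step are routine once that estimate is fixed.
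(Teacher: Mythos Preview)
The paper does not give its own proof of this statement: it is recorded as a \textbf{Fact} and attributed verbatim to \cite[Lemma~3.4]{ar15}. Your argument is precisely the standard one behind that lemma --- count prime divisors of each exponent difference in the dyadic interval, divide by an explicit Rosser--Schoenfeld lower bound $\pi(2\lambda)-\pi(\lambda)\ge \tfrac{3}{5}\lambda/\ln\lambda$ to bound the per-pair collision probability by $\tfrac{5\ln D}{3\lambda}$, sum over $\binom{T}{2}$ pairs, double to pass from pairs to colliding terms, and finish with Markov. The arithmetic you give checks out exactly against the constant $\tfrac{5}{3\epsilon(1-\gamma)}$ in the hypothesis, so there is nothing to correct.
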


To compute $f\bmod x^p-1$, one has to evaluate $f$ on $p$-PRUs. First, we need a $p$-PRU $\omega\in\GF q$ for some prime
$q$, and then a $p$-PRU $\omega_k\in\ZZ/q^k\ZZ$ for some integer $k$. To get $\omega$, we actually generate the triple
$(p,q,\omega)$ in a single algorithm, with the required properties. In particular, we need to find two prime numbers
$p$, $q$ such that $p\divides(q-1)$, that is $q$ is in the arithmetic progression $\{ap+1:a\ge 1\}$, and such that
$q=\poly(p)$. To this end, we generate $p$ at random and sample random elements $<p^6$ in the arithmetic progression
until a prime $q$ is found. Such an algorithm can be found in Arnold's Ph.D. thesis~\cite{arn16} with a rigorous proof
based on effective versions of Dirichlet's theorem~\cite{AH15,Sed18}. The next fact presents a variant with better
probability bounds and a larger range of validity. We provide the complete proof in a short note \cite{proofs}.

\begin{fact}\label{lem:mcpap}
  There exists an explicit Monte Carlo algorithm which, given a bound
  $\lambda\ge\frac{2^{58}}{\epsilon^2}$, produces a triple
  $(p,q,\omega)$ that has the following properties with probability at least
  $1-\epsilon$, and returns \fail otherwise:
  \begin{itemize}
  \item $p$ is uniformly distributed amongst the primes of $(\lambda,2\lambda)$;
  \item $q\le\lambda^6$ is a prime such that $p\divides(q-1)$;
  \item $\omega$ is a $p$-primitive root of unity in $\GF q$;
  \end{itemize}
  Its worst-case bit complexity is $\polylog(\lambda)$. Further,  if $\lambda\ge
  \sqrt[5]{\frac{48}{\epsilon}\ln K}$ for some integer $K>0$, the probability that $q$ divides $K$ is at most $\epsilon$.
\end{fact}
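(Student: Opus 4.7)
I would use a three-stage Monte Carlo procedure: generate $p$, then $q$, then $\omega$, with the failure probability of each stage capped at $\epsilon/3$. Each stage samples random integers up to a fixed number of times (never loops until success) so that the worst-case bit cost is $\polylog(\lambda)$. The last claim then follows from a counting argument that reuses the density estimate needed for the $q$-stage.

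\textbf{The two easy stages.} For Stage~1, sample $\Theta(\log(1/\epsilon)\cdot\ln\lambda)$ i.i.d.\ integers uniformly in $(\lambda,2\lambda)$, test each for primality, and return the first prime found (else \fail). Explicit forms of the Prime Number Theorem (e.g.\ Dusart's bounds) give density $\ge c/\ln\lambda$ in this interval, so the chosen sample count forces the failure probability below $\epsilon/3$; i.i.d.\ symmetry ensures $p$ is uniform amongst primes in $(\lambda,2\lambda)$ conditional on success. The hypothesis $\lambda\ge 2^{58}/\epsilon^2$ gives $\log(1/\epsilon)=\oh{\log\lambda}$, keeping the sample count $\polylog(\lambda)$. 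For Stage~3, since $p\divides(q-1)$ the group $\GF q^*$ has a unique order-$p$ subgroup $H$; for a uniform $\zeta\in\GF q^*$ the element $\omega=\zeta^{(q-1)/p}$ lies in $H$ and generates $H$ unless $\omega=1$, which happens for exactly $1/p<1/\lambda$ of the choices. A bounded number of retries drives the failure below $\epsilon/3$.

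\textbf{Stage~2 and the last claim.} With $p$ fixed, sample $a$ uniformly from $[1,\lfloor(\lambda^6-1)/p\rfloor]$, set $q=ap+1$, and test primality; repeat $\polylog(\lambda)$ times, returning the first prime found. An effective version of Dirichlet's theorem on primes in arithmetic progressions (Heath--Brown--type bounds, as detailed in Arnold's thesis) guarantees a density of at least $c'/\ln^2\lambda$ of primes in this sample space; under the hypothesis $\lambda\ge 2^{58}/\epsilon^2$ the error terms are negligible and the failure probability sits below $\epsilon/3$. Conditional on success, $q$ is uniform amongst primes $\equiv 1\pmod p$ in $(1,\lambda^6]$. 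For the last claim, at most $\lfloor\ln K/\ln\lambda\rfloor$ such primes can divide $K$ (each being $>\lambda$), while the effective estimate gives $\Theta(\lambda^5/\ln\lambda)$ primes in total; the ratio is $\oh{\ln K/\lambda^5}$, and the constant $48$ in the hypothesis is calibrated to dominate all implicit constants.

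\textbf{Main obstacle.} The key difficulty is sourcing a sufficiently \emph{quantitative} Dirichlet-type bound: a purely qualitative version suffices for the existence of $(p,q,\omega)$ but not for the explicit threshold $\lambda\ge 2^{58}/\epsilon^2$ nor the explicit constant $48$. The proof hinges on a Siegel--Walfisz- or Heath--Brown-style estimate with explicit constants, applied at a modulus $p$ of the progression that itself grows like $\Theta(\lambda)$ rather than being fixed; reconciling the growth of $p$ with the effective error terms is what forces the lower bound on $\lambda$.
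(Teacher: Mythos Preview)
Your three-stage outline matches the sketch the paper gives before the statement (random $p$ in $(\lambda,2\lambda)$, then random $q$ in the progression $\{ap+1\}$ below $\lambda^6$, then $\omega=\zeta^{(q-1)/p}$), and the paper likewise defers the full argument to the companion note \cite{proofs}. So at the level of algorithm design you are on track.

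There is, however, a substantive gap in the number-theoretic input you invoke for Stage~2. You appeal to ``Siegel--Walfisz- or Heath--Brown-style'' estimates for a \emph{fixed} modulus $p$, but Siegel--Walfisz only controls $\pi(x;p,1)$ for $p\le(\log x)^A$, whereas here $p\sim\lambda$ and $x\le\lambda^6$, so $p\sim x^{1/6}$---far outside that range. No unconditional lower bound of the shape $\pi(\lambda^6;p,1)\ge c\lambda^5/\ln\lambda$ is known for \emph{every} prime $p\in(\lambda,2\lambda)$ (Siegel zeros are the obstruction). The paper's cited tools \cite{AH15,Sed18} are explicit \emph{Bombieri--Vinogradov} theorems: they control the error in $\pi(x;p,1)$ only on \emph{average} over the moduli $p\le x^{1/2}$. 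The point is that $p$ is itself random, so one shows that all but an $\epsilon$-fraction of primes $p\in(\lambda,2\lambda)$ are ``good'' (have the expected density of $q$'s), and then the $q$-sampling succeeds conditional on $p$ being good. Your decomposition into three independent $\epsilon/3$ stages obscures this coupling: the failure event for Stage~2 is really a property of the random $p$ from Stage~1, not of the subsequent $a$-samples.

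The same issue affects your argument for the final claim: you need a \emph{lower} bound on the number of primes $q\equiv 1\pmod p$ below $\lambda^6$ to bound the probability that a uniform such $q$ divides $K$, and again this lower bound is only available for most $p$ via Bombieri--Vinogradov, not for each $p$ individually. The counting of divisors of $K$ exceeding $\lambda$ is fine; it is the denominator that needs the averaged estimate, and the constant $48$ ultimately absorbs the explicit constants from \cite{AH15,Sed18}.
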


While the rigorous proof of this fact implies to have large values for $\lambda$, it is not too difficult to see by
running few experiments that such triples exist with good probability even for smaller values. One can find some
preliminary experiments in our short note \cite{proofs}. In this paper, we rely on \cref{lem:mcpap} to provide
rigorously proven algorithm, thus implying limitations on its practicability. Nevertheless,
\cref{algo:mcinterpolate,alg:divT} can be turned into practical ones just by ignoring the constant
$\frac{2^{58}}{\epsilon^2}$ but without any formal proof.

\subsection{Generators of prime-order subgroups}

In the crucial steps of our interpolation algorithm, we need to evaluate in a small size-$p$ multiplicative subgroup within a
larger ring of order $q^k$, where $p\divides(q-1)$ and $k\ge 1$. In order to do so, we need a generator of the order-$p$ subgroup
of the ring $\ZZ/q^k\ZZ$, that is, a $p$th primitive root of unity (PRU) in the ring.

One way to obtain such a generator would be to take a random invertible
element in the ring and raise it to the power $\varphi(q^k)/p =
(q-1)q^{k-1}/p$ modulo $q^k$. The result will certainly have
multiplicative order which divides $p$, and therefore this power of a
random element is a $p$-PRU unless it equals 1.

Unfortunately, that approach is too costly for our purposes, because the
modulus and exponent could both have roughly $k\log q$ bits. There is a
solution to this: take a $p$-PRU $\omega$ in the field $\ZZ/q\ZZ$, and
lift it to a $p$-PRU $\omega_k$ in $\ZZ/q^k\ZZ$ using a Newton iteration. This works
because of the following elementary lemma.

\begin{lemma}\label{lem:qkgen}
  Suppose $p,q$ are primes such that $p\divides(q-1)$ and $k\ge 1$.
  Let $\omega_k$ be any $p$-PRU modulo $q^k$.
  Then $\omega_k \bmod q$ is also a $p$-PRU modulo $q$. 
  Moreover, $\omega_k$ is principal, that is $\omega_k^i-1$ is not
  a zero divisor for $0 < i < p$.
\end{lemma}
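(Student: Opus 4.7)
The plan is to exploit the classical structure of the unit group $(\ZZ/q^k\ZZ)^*$ together with the reduction map to $\GF q$. Since $p$ is a prime dividing $q-1$ we have $p\ge 2$, hence $q\ge 3$, so $q$ is odd and $(\ZZ/q^k\ZZ)^*$ is cyclic of order $q^{k-1}(q-1)$. The natural ring homomorphism $\pi:\ZZ/q^k\ZZ\to\ZZ/q\ZZ$ restricts on units to a surjective group homomorphism whose kernel $K$ has order $q^{k-1}$ and consists exactly of residues $\equiv 1\pmod q$; in particular every element of $K$ has order a power of $q$.

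For the first claim I would observe that $\pi(\omega_k)^p = \pi(\omega_k^p) = 1$, so the order of $\pi(\omega_k)$ in $\GF q^*$ divides $p$. Since $p$ is prime this order is either $1$ or $p$. If it were $1$, then $\omega_k$ would lie in $K$, forcing the order of $\omega_k$ in $(\ZZ/q^k\ZZ)^*$ to be a power of $q$; but its order is $p$, and $p\ne q$ because $p\le q-1$. Hence $\pi(\omega_k)$ has order exactly $p$, which is precisely the statement that $\omega_k\bmod q$ is a $p$-PRU in $\GF q$.

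For principality I would use that the zero divisors of $\ZZ/q^k\ZZ$ are precisely the multiples of $q$, so it suffices to verify $q\nmid(\omega_k^i-1)$ for $0<i<p$. Reducing modulo $q$, this is equivalent to $\pi(\omega_k)^i\ne 1$ in $\GF q$, which is immediate from the first part since $\pi(\omega_k)$ has order $p$ and no smaller positive power equals $1$.

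There is no substantial obstacle; the argument is purely structural. The only point requiring care is ruling out $p=q$ (so that the kernel $K$, being pro-$q$, cannot contain an element of order $p$), and this is forced by the hypothesis $p\mid(q-1)$.
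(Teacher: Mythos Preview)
Your proof is correct. The second part (principality) is handled identically to the paper: zero divisors in $\ZZ/q^k\ZZ$ are exactly the multiples of $q$, and the first part rules out $\omega_k^i\equiv 1\pmod q$ for $0<i<p$.

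For the first part, the paper takes a more explicit route: it fixes a generator $g$ of $(\ZZ/q^k\ZZ)^*$, argues that $g\bmod q$ generates $\GF q^*$, writes $\omega_k=g^{i\varphi(q^k)/p}$ with $1\le i\le p-1$, and then uses Fermat's little theorem to reduce the exponent modulo $q-1$, obtaining $\omega_k\bmod q=(g\bmod q)^{i(q-1)/p}$, which is visibly a $p$-PRU. Your argument is more structural and arguably cleaner: you observe that the kernel of the reduction map on units has order $q^{k-1}$, hence is a $q$-group, so it cannot contain an element of prime order $p\ne q$; thus $\pi(\omega_k)\ne 1$ and its order must be $p$. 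Your route avoids the explicit exponent manipulation and Fermat, and in fact does not really require the full cyclicity of $(\ZZ/q^k\ZZ)^*$---only the size of the kernel. The paper's route, on the other hand, makes the correspondence between $p$-PRUs mod $q^k$ and mod $q$ more explicit, which is what motivates the subsequent Newton-lifting algorithm.
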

\begin{proof} 
  Let $g$ be any generator of $(\ZZ/q^k\ZZ)^*$, which is cyclic since
  $q^k$ is a prime power. Then $g \bmod q$ must also be a generator
  of the smaller group $(\ZZ/q\ZZ)^*$; otherwise the set $\{g^i\bmod
  q^k\}_{i\ge 0}$ would be too small.
  Because $g$ is a generator and $\omega_k$ is a $p$-PRU modulo $q^k$, we
  can write $\omega_k = g^{i \varphi(q^k)/p}$ for some integer
  $i \in \{1,2,\ldots,p-1\}$. This means that
  $$\omega_k\bmod q = g^{i\varphi(q^k)/p} \bmod q
    = (g\bmod q)^{i(q-1)/p} \bmod q,$$
  where we use the fact that $\varphi(q^k) = (q-1)q^{k-1}$ and
  $a^q \bmod q = a$ for any integer $a$. Because $g \bmod q$ is a
  generator modulo $q$, and $1\le i \le p-1$,
  this means that $\omega_k\bmod q$ is a $p$-PRU modulo $q$.
  
  For the second part, since $\omega_k\bmod q$ is a $p$-PRU, 
  $\omega_k^i-1\bmod q\neq 0$ for $0 < i < p$. And zero divisors modulo
  $q^k$ must be multiple of $q$, since $q$ is prime.
\end{proof}

Roughly speaking, \Cref{lem:qkgen} states that there is a 1-1
correspondence between $p$-PRUs modulo $q$ and $p$-PRUs modulo $q^k$.
In particular, for any $p$-PRU $\omega$ modulo $q$, there is a unique
$p$-PRU $\omega_k$ modulo $q^k$ such that $\omega_k \bmod q = \omega$.
We construct the larger $p$-PRU $\omega_k$ through a standard Newton
iteration, solving the equation $\omega_k^p - 1 = 0$ modulo higher and
higher powers of $q$. Assuming we know $\omega_i = \omega_k\bmod q^i$
already, write $\omega_{2i} = \omega_i + a q^i$, where $a < q^i$
consists of the next $i$ base-$q$ digits of $\omega_k$. Solving the
modular equation $\omega_{2i}^p \bmod q^{2i} = 1$ gives
$$a = \left(\frac{1 - \omega_i^p \bmod q^{2i}}{q^i}\right) \omega_i
p^{-1} \bmod q^i,$$
where the fraction divided by $q^i$ is exact integer division, and the
inverse $p^{-1}$ is modulo $q^i$.

\begin{algorithm}
  \caption{\textsc{LiftPRU}}
  \label{alg:liftpru}
  \KwIn{Primes $p,q$ with $p\divides(q-1)$, a $p$-PRU $\omega\in\GF q$ and an integer $k\ge 1$}
  \KwOut{$\omega_k$, a $p$-PRU modulo $q^k$}
  \medskip

  $i \gets 1$ ; $\omega_1\gets\omega$ \;
  \While{$i < k$}{
    $a \gets \omega_i^p \bmod q^{2i}$ \;
    $a' \gets (1-a)/q^i$ using exact integer division \;
    $a'' \gets a' \omega_i p^{-1} \bmod q^i$ \;
    $\omega_{2i} \gets \omega_i + a'' q^i$ \;
    $i \gets 2i$
  }
  \KwRet{$\omega_i \bmod q^k$}
\end{algorithm}

\begin{theorem}\label{thm:liftpru}
  Provided $\omega$ is a $p$-PRU modulo $q$, 
  \Cref{alg:liftpru} returns a $p$-PRU $\omega_k$ modulo $q^k$. 
  It has bit complexity $\softoh{k \log^2 q}$.
\end{theorem}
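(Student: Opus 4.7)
The plan is to verify correctness by induction on the loop iterations and then bound the bit complexity using the doubling of precision at each step. Two things must be shown: that after the $j$th iteration the value $\omega_{2^j}$ satisfies $\omega_{2^j}^p \equiv 1 \pmod{q^{2^j}}$, and that the final value $\omega_k$ is not merely a $p$-th root of unity but a primitive one.

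First I would verify the inductive step. Assuming $\omega_i^p \equiv 1 \pmod{q^i}$, I write $\omega_{2i} = \omega_i + a'' q^i$ and expand
\[ (\omega_i + a'' q^i)^p \equiv \omega_i^p + p\,\omega_i^{p-1}\,a''\,q^i \pmod{q^{2i}}, \]
since every binomial term from the quadratic one onward contains $(q^i)^2 = q^{2i}$. Requiring this to equal $1$ modulo $q^{2i}$ and dividing the resulting congruence by $q^i$ (an exact division by the induction hypothesis) yields
\[ p\,\omega_i^{p-1}\,a'' \equiv (1 - \omega_i^p)/q^i \pmod{q^i}. \]
Solving for $a''$ gives $a'' \equiv \omega_i^{1-p}\, p^{-1} \cdot (1-\omega_i^p)/q^i \pmod{q^i}$. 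The key simplification, and the one step of the argument requiring a moment's care, is $\omega_i^{1-p} \equiv \omega_i \pmod{q^i}$, a direct consequence of $\omega_i^p \equiv 1 \pmod{q^i}$; this matches the algorithm's $a'' = a' \omega_i p^{-1}$. The inverse $p^{-1} \bmod q^i$ exists because $p$ and $q$ are distinct primes, completing the induction.

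To argue primitivity, I would invoke the reasoning behind \cref{lem:qkgen}: each correction $a'' q^i$ is divisible by $q$, so $\omega_k \bmod q = \omega_1 = \omega$, which has order exactly $p$ in $\GF{q}$. Hence the order of $\omega_k$ modulo $q^k$ is a multiple of $p$, while being a divisor of $p$ by the first part, and since $p$ is prime the order is exactly $p$.

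For the bit complexity, the dominant cost per iteration is computing $\omega_i^p \bmod q^{2i}$ by repeated squaring, which is $O(\log p)$ modular multiplications of integers of bit-size $O(i \log q)$, each in $\softoh{i \log q}$ bit operations; the exact division, the modular inverse of $p$, and the remaining arithmetic at level $i$ are each bounded by $\softoh{i \log q}$. Using $p < q$ so that $\log p \le \log q$, every iteration costs $\softoh{i \log^2 q}$; summing the geometric series over $i = 1, 2, 4, \ldots$ up to the first value $\ge k$ yields the claimed $\softoh{k \log^2 q}$. The main obstacle is therefore purely the algebraic reduction of the Newton correction to the simple form used by the algorithm; beyond that, the argument is standard Hensel lifting plus a geometric-sum estimate.
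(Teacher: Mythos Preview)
Your proposal is correct and follows essentially the same approach as the paper: the correctness is the standard Hensel/Newton lifting derivation (which the paper sketches in the paragraph preceding the algorithm), primitivity is deduced from $\omega_k \bmod q = \omega$ via \cref{lem:qkgen}, and the complexity comes from the dominant final iteration with $p<q$. Your write-up is simply more explicit than the paper's two-line proof, which relies on the preceding discussion for correctness and only records the complexity bound.
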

\begin{proof}
  The loop runs $O(\log k)$ times.
  The dominating step is 
  $\omega_i^p \bmod q^{2i}$ at the last phase of the Newton iteration
  with $2i \ge k$. Because $p < q$, this gives the stated bit
  complexity.
\end{proof}

\section{Univariate Interpolation}\label{sec:mc}\label{ssec:mcinterp}

In this section, we present a Monte Carlo algorithm to interpolate a sparse polynomial given through an MBB.
Our algorithm builds on classical techniques but with the originality to use non-integral domains and not only finite fields.  We
first recall some of these techniques before describing the algorithm.

Given an MBB for $f$, we need to compute the exponents of $f\bmod x^p-1$. We note that evaluating $f$ at powers
of a $p$-th primitive root of unity ($p$-PRU) $\omega$ is equivalent to evaluating $f\bmod x^p-1$ at the same points. As in the
classical Ben-Or--Tiwari algorithm, given the sequence $f(1)$, $f(\omega)$, \dots, $f(\omega^{2T-1})$, we can compute a
degree-$\le T$ annihilator polynomial $\Lambda$ in $\softoh{T}$ operations in $\GF q$ using fast Berlekamp-Massey algorithm
\cite{Schonage:fastGCD:71,Dornstetter87}. The roots of $\Lambda$ are the $\omega^e$ where $e < p$ belongs to the support of
$f\bmod x^p-1$. In our case, $p$ is small and these exponents can be retrieved in $\softoh{p}$ arithmetic operations using
Bluestein's chirp transform~\cite{Bluestein:1970} to evaluate $\Lambda$ at $1$, $\omega$, \dots, $\omega^{p-1}$. Altogether, this
gives the following.

\begin{fact}\label{fact:BerlekampMassey}
  Given the evaluations of a $T$-sparse polynomial $f\in\GF q[x]$ at $1$, $\omega$, \dots,
  $\omega^{2T-1}$ where $\omega\in\GF q$ is a $p$-PRU, one can compute the exponents of $f\bmod
  x^p-1$ in $\softoh{T+p}$ operations in $\GF{q}$ or~$\softoh{(T+p)\log q}$ bit operations.
\end{fact}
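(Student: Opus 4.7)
The plan is to compose three standard ingredients: a structural observation about the sequence $f(1), f(\omega), \ldots, f(\omega^{2T-1})$; fast Berlekamp--Massey to extract the annihilator; and Bluestein's chirp transform to locate its roots among the powers of $\omega$.

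First I would observe that evaluating $f$ at powers of a $p$-PRU is the same as evaluating $f\bmod x^p-1$ at those points, since $\omega^p=1$. Writing $f\bmod x^p-1 = \sum_{j=1}^{t'} c'_j x^{e'_j}$ with $e'_j\in\{0,\ldots,p-1\}$ distinct and $t'\le T$ (terms that collide modulo $p$ merge, and some may cancel), the sequence $s_i := f(\omega^i) = \sum_{j=1}^{t'} c'_j \omega^{i e'_j}$ is a sum of $t'$ geometric progressions with distinct ratios $\omega^{e'_j}$ (distinct because $\omega$ has order exactly $p$). A standard calculation shows that $(s_i)_i$ is linearly recurrent with minimal polynomial
\[\Lambda(y) = \prod_{j=1}^{t'}\bigl(y-\omega^{e'_j}\bigr),\]
of degree $t'\le T$.

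Next I would feed the $2T$ samples $s_0,\ldots,s_{2T-1}$ into the fast Berlekamp--Massey algorithm \cite{Schonage:fastGCD:71,Dornstetter87}; since $\deg\Lambda\le T$, these $2T$ samples uniquely determine $\Lambda$, and it is recovered in $\softoh{T}$ operations in $\GF q$. The roots of $\Lambda$ are precisely $\{\omega^{e'_j}\}_{j=1}^{t'}$, so to expose the exponents themselves it suffices to evaluate $\Lambda$ at $\omega^0,\omega^1,\ldots,\omega^{p-1}$ and report those indices $i$ for which $\Lambda(\omega^i)=0$. Injectivity of $i\mapsto\omega^i$ on $\{0,\ldots,p-1\}$ (again because $\omega$ is a $p$-PRU) guarantees that each $e'_j$ is read off unambiguously.

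The evaluation of a polynomial of degree $\le T\le p$ at the $p$ consecutive powers of $\omega$ is the chirp-$z$ transform of length $p$. Using the identity $2ij = i^2+j^2-(i-j)^2$, Bluestein's algorithm reduces this transform to a single product of two polynomials of degree $O(p)$ over $\GF q$, which is performed in $\softoh{p}$ operations in $\GF q$ using any softly linear polynomial multiplication routine. Combining the two stages yields $\softoh{T+p}$ operations in $\GF q$, and since each such operation costs $\softoh{\log q}$ bit operations, the total bit cost is $\softoh{(T+p)\log q}$. The only real subtlety is justifying that $2T$ evaluations suffice for Berlekamp--Massey, which is handled by the degree bound $t'\le T$ above; the rest is bookkeeping on well-known subroutines.
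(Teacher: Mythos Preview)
Your proposal is correct and follows essentially the same approach as the paper: the justification preceding the Fact already outlines exactly the two steps you use, namely fast Berlekamp--Massey to recover the annihilator $\Lambda$ in $\softoh{T}$ operations, followed by Bluestein's chirp transform to evaluate $\Lambda$ at $1,\omega,\ldots,\omega^{p-1}$ in $\softoh{p}$ operations. Your write-up simply fills in the details the paper leaves implicit (the linear-recurrence structure of $(s_i)$, the degree bound $t'\le T$ ensuring $2T$ samples suffice, and the reduction of the chirp transform to a single polynomial product).
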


During the algorithm, we need both to evaluate a sparse polynomial on a geometric progression and to reconstruct a sparse
polynomial from these evaluations and its exponents. If $f = \sum_{i=0}^{t-1} c_i x^{e_i}\in \GF{q}[x]$ is a sparse polynomial,
then for any $\omega$
\[\begin{pmatrix}
    1 &  \dotsb & 1 \\
    \omega^{e_0}&\dotsb&\omega^{e_{t-1}}\\
    \omega^{2e_0}&\dotsb&\omega^{2e_{t-1}}\\
    \vdots & & \vdots\\
    \omega^{(t-1)e_0}&\dotsb&\omega^{(t-1)e_{t-1}}
  \end{pmatrix}
  \begin{pmatrix} c_0\\c_1\\c_1\\\vdots\\c_{t-1}\end{pmatrix}
  =
  \begin{pmatrix} f(1) \\ f(\omega) \\ f(\omega^2)\\\vdots \\ f(\omega^{t-1})\end{pmatrix}.\]

This shows that the evaluation is a matrix-vector product and the interpolation the resolution of a linear system, where the
matrix is a transposed Vandermonde matrix.

These problems admit algorithms of complexity $\softoh{t}$ over any finite field through connections to dense polynomial arithmetic in degree $t$~\cite{KaltofenLakshman:1988,Bostan:2003}
Actually, these algorithms work for more general rings.
It is trivial for the matrix-vector product that does not require any inversion in the ring.
The resolution of the linear system requires the matrix to be invertible,
that is $\omega^{e_i}-\omega^{e_j}$ must be a unit for $i\neq j$.
This condition holds when $\omega$ is a $p$-th \emph{principal} root of unity,
that is when $\omega^p=1$ and $\omega^i-1$ is not a zero divisior for $0<i<p$.
The following fact summarizes these known results.

\begin{fact}\label{fact:vandermonde}
  Let $R$ be a ring, $f = \sum_{i=0}^{t-1} c_i x^{e_i}$ be a sparse polynomial over $R$,
  and $\omega$ a \emph{principal} $p$-th root of unity. Then
  \begin{itemize}
  \item evaluating $f\bmod x^p-1$  at $1$, $\omega$, \dots, $\omega^{t-1}$, and 
  \item retrieving the coefficients of $f\bmod x^p-1$ from its set of exponents and $f(1)$, \dots, $f(\omega^{t-1})$
\end{itemize}
can be done in $\softoh{t\log p}$ operations in $R$.
\end{fact}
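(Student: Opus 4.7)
My plan is to cast both tasks as fast operations on the $t \times t$ transposed Vandermonde matrix $V$ with entries $V_{j,i} = \omega^{j e_i}$ and ``nodes'' $z_i := \omega^{e_i}$: the evaluation direction is the matrix--vector product $V\mathbf{c}$, and the reconstruction direction is the solution of $V\mathbf{c} = (f(\omega^j))_j$ for $\mathbf{c}$.

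First, I would precompute the $t$ nodes $z_i = \omega^{e_i}$ by fast modular exponentiation; this costs $O(\log e_i) \le O(\log p)$ ring multiplications per node, hence $O(t\log p)$ in total, and this step will dominate the overall bound. For the evaluation direction, I would use Tellegen's transposition principle, which identifies $V\mathbf{c}$ with the transpose of multipoint evaluation of a degree-$(t-1)$ polynomial at the nodes $z_0,\dots,z_{t-1}$; since the subproduct-tree algorithm of Bostan--Schost performs such multipoint evaluation in $\softoh{t}$ ring operations, its transpose has the same complexity. For the reconstruction direction, the matrix $V$ is invertible since $\omega$ is principal: each difference $z_j - z_i = \omega^{e_i}(\omega^{e_j - e_i} - 1)$ is a unit (as noted in the paragraph preceding the statement, using that $0 < e_j - e_i < p$ and that $\omega$ is itself a unit via $\omega \cdot \omega^{p-1} = 1$), so the fast transposed Vandermonde solver---dual to fast Lagrange interpolation through the same subproduct tree---recovers $\mathbf{c}$ in $\softoh{t}$ ring operations, using only divisions by such differences.

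The main obstacle will be justifying that these subproduct-tree evaluation and interpolation algorithms, most often stated over fields, transfer verbatim to the commutative-ring setting under the principality hypothesis. The critical observation is that every division performed by these algorithms is by a product of factors of the form $z_i - z_j$, each of which is a unit by hypothesis; therefore the algorithms and their $\softoh{t}$ complexity bounds carry over without change. Adding the $O(t\log p)$ node-precomputation cost to the $\softoh{t}$ cost of the multipoint evaluation or solving step yields the advertised $\softoh{t\log p}$ complexity.
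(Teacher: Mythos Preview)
Your proposal is correct and follows essentially the same approach that the paper sketches in the paragraph immediately preceding the Fact: reduce both tasks to transposed Vandermonde operations at the nodes $\omega^{e_i}$, appeal to the $\softoh{t}$ subproduct-tree algorithms of \cite{KaltofenLakshman:1988,Bostan:2003}, and observe that the only divisions occurring are by differences $\omega^{e_i}-\omega^{e_j}$, which are units because $\omega$ is principal. The paper treats this as a known result and gives no further proof; your write-up supplies exactly the expected details, including the explicit identification of the $O(t\log p)$ node-precomputation cost that accounts for the $\log p$ factor in the bound. One small imprecision: you write ``$0 < e_j - e_i < p$'', but the exponents need not be ordered; the correct statement is that $e_j - e_i \not\equiv 0 \pmod p$ (the $e_i$'s being the distinct exponents of $f \bmod x^p-1$), so $\omega^{e_j-e_i}-1$ is a non-zero-divisor and hence $z_j - z_i$ is a unit.
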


We shall use these results over two rings.  First, using \cref{fact:BerlekampMassey} we perform the evaluation on powers of a $p$-PRU in $\GF q$ to recover the set of exponents modulo $p$. From these exponents, we rely on \cref{fact:vandermonde} with
a $p$-PRU $\omega_k \in \ZZ/q^k\ZZ$ to recover the polynomial modulo $x^p-1$ over the larger ring $\ZZ/q^k\ZZ$, using this time
both evaluation and interpolation. Note that $k$ is carefully chosen so that it allows to recover all the integer coefficients of
$f \bmod (x^p-1)$.
The correctness follows directly from Lemma~\ref{lem:qkgen} that shows that a $p$-PRU in $\ZZ/q^k\ZZ$ is also principal.

While we completely know $f\bmod x^p-1$, some terms of this polynomial come from collisions: That is, two (or more) distinct
monomials $c_ix^e_i$ and $c_jx^{e_j}$ from $f$ may \emph{collide} modulo $p$ and create the term $(c_i+c_j)x^{e_i\bmod p}$ in
$f\bmod x^p-1$.  We shall overcome this difficulty by a random choice of $p$ that guarantees that with good probability, not too
many terms collide. Other terms of $f\bmod x^p-1$ are \emph{collision-free}, that is of the form $c_ix^{e_i\bmod p}$. To recover
the exponent $e_i$ from these terms, we embed the exponents into its coefficients.

The idea, due to \citet{ar15}, is to compute the sparse representations of both
$f$ and $f((1+q^k)x)$, modulo $\langle x^p-1,q^{2k}\rangle$.  Since
$(1+q^k)^{e_i} = 1+e_iq^k\bmod q^{2k}$, a collision-free term $c_ix^{e_i}$ is mapped to $c_ix^{e_i\bmod p}$ in $f\bmod\langle x^p-1,q^{2k}\rangle$ and 
$c'_ix^{e_i\bmod p}$ in $f((1+q^k)x)\bmod\langle x^p-1,q^{2k}\rangle$ where $c'_i=c_i(1+e_iq^k)$. This allows us to recover both
$c_i$ and $e_i = (c'_i/c_i-1)/q^k$ as soon as $k$ is large enough. More precisely, we need $c_i$ to be a unit and representable in
$\ZZ/q^{2k}\ZZ$, and $(1+e_iq^k) \le q^{2k}$ so that the division by $q^k$ remains over the integers. That is, $q$ must
be chosen not to divide any coefficient and $k>\max(\frac{1}{2}\log_q 2H, \log_q D)$.

We note that there is no \emph{a priori} way to distinguish between collision-free terms and colliding terms. 
For some colliding terms, the recovered value of $e_i$ is clearly wrong since it is not integral or too
large, but one cannot avoid recovering unwanted terms in general. This is again taken care of through the choice of $p$, as in
\cite{Huang:2019,HuangGao19}, to avoid reconstructing too many erroneous terms.

\begin{fact}\label{fact:reconstruction}
  Given the sparse representation of $f(x)\bmod\langle x^p-1,q^{2k}\rangle$ and
  $f((1+q^k)x)\bmod \langle x^p-1,q^{2k}\rangle$ such that $q$ does not divide any
  coefficient of $f\bmod x^p-1$ and $k\ge\max(\frac{1}{2}\log_q 2H,\log_q D)$, one can
  compute a set of tentative terms of $f$, containing all the collision-free terms modulo
  $x^p-1$, in $\oh{T}$ arithmetic operations.
\end{fact}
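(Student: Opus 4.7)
My plan is to describe a short per-pair decoding procedure and to verify that every collision-free term of $f\bmod x^p-1$ survives it, then to count the arithmetic cost.

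First I would pair up the terms of the two inputs by their residue modulo $p$. The reductions of $f(x)$ and $f((1+q^k)x)$ modulo $x^p-1$ share the same support (multiplying $x$ by the unit $1+q^k$ rescales each coefficient but does not move any exponent), so a single linear pass through the two sparse lists produces a matching of at most $T$ pairs in $\oh{T}$ time. For each matched pair $(c,c')$ at residue $r$, my decoder would compute $u=c'c^{-1}-1\bmod q^{2k}$, write $u=u_0+u_1q^k$ with $0\le u_0,u_1<q^k$, accept only when $u_0=0$ and $u_1<D$, and then output the pair $(\tilde c,\tilde e)$ consisting of the balanced representative $\tilde c\in(-q^{2k}/2,q^{2k}/2]$ of $c$ together with the exponent $\tilde e=u_1$.

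The main step is correctness on a collision-free term $c_ix^{e_i}$. The key identity comes from the binomial expansion
\[(1+q^k)^{e_i}=\sum_{j\ge 0}\binom{e_i}{j}q^{kj}\equiv 1+e_iq^k\pmod{q^{2k}},\]
which yields $c\equiv c_i$ and $c'\equiv c_i(1+e_iq^k)\pmod{q^{2k}}$. By the hypothesis that $q$ does not divide any coefficient of $f\bmod x^p-1$, the inverse $c^{-1}\bmod q^{2k}$ exists and one gets $u\equiv e_iq^k\pmod{q^{2k}}$. The condition $k\ge\log_q D$ then gives $e_iq^k<q^{2k}$, so this congruence is actually an equality of nonnegative integers and the digit split returns $u_0=0$ and $u_1=e_i$. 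Similarly, $k\ge\tfrac{1}{2}\log_q 2H$ gives $q^{2k}\ge 2H\ge 2|c_i|$, so the balanced lift of $c$ recovers $c_i$ exactly. Hence $(\tilde c,\tilde e)=(c_i,e_i)$.

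Finally, each of the at most $T$ matched pairs costs $\oh{1}$ operations in $\ZZ/q^{2k}\ZZ$ (one inversion, one multiplication, one subtraction, a digit split and a balanced lift), for a total of $\oh{T}$ arithmetic operations. The main point worth emphasising is that colliding terms may also pass the tests and produce spurious candidates; this is expected and harmless, because the statement only asks for a set \emph{containing} every collision-free term, and the outer algorithm will discard false positives via repetition over different primes $p$.
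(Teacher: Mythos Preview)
Your argument is correct and follows exactly the approach the paper sketches in the paragraph preceding the Fact: use the identity $(1+q^k)^{e_i}\equiv 1+e_iq^k\pmod{q^{2k}}$ to read off $e_i=(c'/c-1)/q^k$ from each matched pair and recover $c_i$ by the balanced lift, with the bounds on $k$ making both extractions exact on collision-free terms. The paper gives no separate formal proof beyond that explanation; one cosmetic nit is that your acceptance test should be $u_1\le D$ rather than $u_1<D$, and your ``rescaling'' justification for equal supports is better phrased via $(1+q^k)^{e_i}\equiv 1\pmod q$, so that both coefficients at a given residue agree modulo $q$ and are hence units by hypothesis.
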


\mcinterpolate given in Algorithm~\ref{algo:mcinterpolate} follows the idea from the three previous facts to
reach a softly-linear time complexity. 

\begin{algorithm}
\caption{\mcinterpolate}\label{algo:mcinterpolate} 
\LinesNumbered 
\Input{a polynomial $f\in\ZZ[x]$ represented by an MBB; bounds $D$, $T$ and $H$ on
respectively the degree, the sparsity and the height of $f$} 
\Output{the sparse representation of $f\in\ZZ[x]$ with probability $\ge \frac{2}{3}$;
otherwise any $T$-sparse polynomial or \fail}

\BlankLine

$f^*\gets 0$ ; $\epsilon \gets 1/(9\ceil{\log T})$\;
$\lambda\gets 
    \max\left(\frac{2^{58}}{\epsilon^2}, \frac{5}{\epsilon}(T-1)\ln D,
               \sqrt[5]{\frac{48}{\epsilon}T\ln H} \right)$\;
\tcc{Heuristically $\frac{2^{58}}{\epsilon^2}$ can be replaced by $1$,
    see discussion after \cref{lem:mcpap}.}
\While{$T \ge 1$}{
    Compute a triple $(p,q,\omega)$ such that $\omega\in\GF{q}$ is a $p$-PRU where $p$ and
    $q$ are prime numbers and $\lambda < p < 2\lambda$ using \cref{lem:mcpap}\; Evaluate
    $(f-f^*)$ at $1$, $\omega$, \dots, $\omega^{2T-1}$ and compute the exponents of
    $(f-f^*)\bmod\langle x^p-1,q\rangle$ using \cref{fact:BerlekampMassey} \;
    \label{step:BM}
    
    Compute a $p$-PRU $\omega_k\in\ZZ/q^{2k}\ZZ$ where
    $k=\lceil\max(\frac{1}{2}\log_q2H,\log_qD)\rceil$ using \cref{thm:liftpru}\; Evaluate
    $(f-f^*)$ at $1$, $\omega_k$, \dots, $\omega_k^{T-1}$ and compute the sparse
    representation of $(f-f^*)\bmod \langle x^p-1,q^{2k}\rangle$ using
    \cref{fact:vandermonde}\;\label{step:vdm}

    Perform the same step with shifted evaluation points to compute the sparse
    representation of $(f-f^*)((1+q^k)x)\bmod \langle x^p-1,
    q^{2k}\rangle$\;\label{step:vdm2}

    Compute tentative terms of $(f-f^*)$ using \cref{fact:reconstruction}\;
    \label{step:reconstruction}

    Add the tentative terms to $f^*$ ; $T \gets \floor{T/2}$ \;
}
\KwRet{$f^*$}\;
\end{algorithm}

\begin{theorem}\label{thm:MCI}
\sloppy
  Algorithm \mcinterpolate works as specified. It requires $\oh{T}$ probes to the MBB, $\softoh{T\log DH}$
  operations on integers of size $\oh{\log (T\log DH)}$, and $\softoh{T\log\log DH}$ operations on
  integers of size $\oh{\log DH}$. If the input is an SLP of length $L$ and if $H$ is also a bound on the absolute values of the
  constants of the SLP, the bit complexity of the algorithm is $\softoh{LT(\log D+\log H)}$.

  For any $\rho\ge 1$, $\oh{\rho}$ repetitions of the algorithm improve the success
  probability to $1-\frac{1}{2^\rho}$.
\end{theorem}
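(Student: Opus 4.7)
\emph{Correctness and success probability.}
The plan is to argue by induction on the loop iteration $i$ (with $T_i=\floor{T/2^{i-1}}$) that, with probability at least $1-3\epsilon$ per iteration, the remainder $f-f^*$ is correctly interpolable as a $T_i$-sparse polynomial for the next iteration. I would isolate three failure events in a single iteration, each of probability at most $\epsilon=1/(9\ceil{\log T})$:
(a) the triple-generation subroutine of \cref{lem:mcpap} returns \fail;
(b) the produced prime $q$ divides some nonzero coefficient of $f-f^*$, controlled by the last clause of \cref{lem:mcpap} with $K$ the product of all at most $T$ coefficients of magnitude $\le 2H$, so $\ln K=\oh{T\log H}$ and the algorithm's choice $\lambda\ge\sqrt[5]{(48/\epsilon)T\ln H}$ is precisely the required hypothesis;
(c) fewer than $T_i/2$ terms of $f-f^*$ are collision-free modulo $p$, controlled by \cref{fact:collisions} with $\gamma=1/2$, ensured by $\lambda\ge 5(T-1)\ln D/\epsilon$.
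When none of (a)--(c) fires, \cref{fact:BerlekampMassey} recovers the mod-$p$ exponents, \cref{lem:qkgen} certifies that the lifted $\omega_k$ from \cref{thm:liftpru} is a principal $p$-PRU so that \cref{fact:vandermonde} inverts the transposed Vandermonde system for both $(f-f^*)$ and its $(1+q^k)x$ shift, and \cref{fact:reconstruction} then extracts a set of tentative terms containing all at least $T_i/2$ collision-free true terms of $f-f^*$. A union bound across the $\ceil{\log T}$ iterations gives total failure probability at most $3\epsilon\ceil{\log T}=1/3$.

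\emph{Complexity.}
With $\epsilon=\Theta(1/\log T)$, the algorithm's $\lambda$ is $\softoh{T\log DH}$, so $p<2\lambda$ and $q\le\lambda^6$ have bit-length $\oh{\log(T\log DH)}$, while $k=\ceil{\max(\tfrac{1}{2}\log_q 2H,\log_q D)}$ gives $k\log q=\oh{\log DH}$. Within iteration $i$ at sparsity $T_i$, I would account: triple generation costs $\polylog(\lambda)$; PRU lifting via \cref{thm:liftpru} costs $\softoh{\log DH\cdot\log q}$; the MBB is probed $\oh{T_i}$ times; the Berlekamp--Massey phase needs $\softoh{T_i+p}$ operations in $\GF{q}$ by \cref{fact:BerlekampMassey}; each of the two Vandermonde phases needs $\softoh{T_i\log p}$ operations in $\ZZ/q^{2k}\ZZ$ by \cref{fact:vandermonde}; reconstruction adds $\oh{T_i}$ arithmetic operations. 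Since $T_i$ decreases geometrically and $p<2\lambda$ is uniform in the iterations, summation yields $\oh{T}$ MBB probes overall, $\softoh{T\log DH}$ operations on $\oh{\log(T\log DH)}$-bit integers (dominated by the $p$ term of Berlekamp--Massey), and $\softoh{T\log\log DH}$ operations on $\oh{\log DH}$-bit integers (from the Vandermonde phases, whose only extra factor is $\log p=\oh{\log\log DH}$).

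\emph{SLP input and amplification.}
If the MBB is instantiated by an SLP of length $L$ with constants of magnitude $\le H$, one evaluation modulo $m\le q^{2k}$ costs $\softoh{L(\log m+\log H)}=\softoh{L(\log D+\log H)}$ bit operations, so the $\oh{T}$ probes contribute $\softoh{LT(\log D+\log H)}$ and absorb all internal costs from the complexity analysis above. For the amplification part, running $\oh{\rho}$ independent copies and outputting any polynomial returned more than half the time upgrades the $2/3$ success rate to $1-2^{-\rho}$ by a standard Chernoff bound.

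\emph{Main obstacle.} The delicate point is preserving the sparsity invariant $\#(f-f^*)\le T_i$ when \cref{fact:reconstruction} may, from colliding mod-$p$ residues that happen to pass its integrality and magnitude tests, inject spurious terms into $f^*$. I would handle this by noting that there are at most $T_i/2$ colliding residues with good probability, each residue can yield at most one tentative term, and the random choice of $p$ (together with the choice of $q$, which is fresh at every iteration) ensures that the correct collision-free terms removed from $f-f^*$ outweigh any spurious additions, so the residual sparsity halves as required for the next iteration's smaller bound $T_{i+1}=\floor{T_i/2}$ to remain valid.
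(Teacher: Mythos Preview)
Your complexity accounting, the SLP cost, and the Chernoff amplification are essentially the paper's argument and are fine. The correctness argument, however, has a quantitative gap precisely at the point you flag as the ``main obstacle.''

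First, the value of $\gamma$ is wrong. The algorithm sets $\lambda\ge \frac{5}{\epsilon}(T-1)\ln D$, and comparing with \cref{fact:collisions} (where the threshold is $\frac{5}{3\epsilon(1-\gamma)}(T-1)\ln D$) gives $3(1-\gamma)=1$, i.e.\ $\gamma=\tfrac{2}{3}$, not $\tfrac{1}{2}$. This matters for the next point.

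Second, your bookkeeping does not establish halving. With your stated numbers---at least $T_i/2$ collision-free terms removed and at most $T_i/2$ spurious terms added---the residual sparsity is only bounded by $T_i$, so the invariant $\#(f-f^*)\le T_{i+1}=\floor{T_i/2}$ does not follow, and $\ceil{\log T}$ iterations need not suffice. The paper's argument is sharper and uses both ingredients together: writing $t=\#(f-f^*)\le T_i$ for the \emph{actual} sparsity, $\gamma=\tfrac{2}{3}$ gives at most $t/3$ \emph{colliding terms}; since every collision residue absorbs at least two original terms, there are at most $t/6$ \emph{collision residues}, hence at most $t/6$ spurious tentative terms. The new sparsity is then at most $t-\tfrac{2}{3}t+\tfrac{1}{6}t=\tfrac{1}{2}t\le\tfrac{1}{2}T_i$. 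Both the value $\gamma=\tfrac{2}{3}$ and the factor of two between colliding terms and collision residues are needed; with $\gamma=\tfrac{1}{2}$ and even the residue-halving observation you would only get $\tfrac{3}{4}t$, which is not enough for $\ceil{\log T}$ rounds.
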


\begin{proof}[Correctness]
The algorithm has three sources of failure at each iteration. First, the algorithm may
fail to produce a triple $(p,q,\omega)$ satisfying the conditions. By \cref{lem:mcpap},
this probability is at most $\epsilon$. Second, the number of collisions of $(f-f^*)\bmod
x^p-1$ may be too large. \Cref{fact:collisions} and our choice of $\lambda$ guarantee
that with probability at least $1-\epsilon$, the number of collisions is at most
$\frac{1}{3}t$ where $t\le T$ is the true sparsity of $(f-f^*)$. Third, some
coefficients of $(f-f^*)\bmod x^p-1$ may vanish modulo $q$. \Cref{lem:mcpap} and our choice 
of $\lambda$ guarantee that this probability is at most $\epsilon$. Therefore, each iteration 
fails with probability at most
$3\epsilon = 1/3\ceil{\log T}$, whence the algorithm fails with probability at most
$\frac{1}{3}$.

We now prove that, assuming that none of these possible failures happens,
$f^* = f$ at the end of the algorithm. \Cref{fact:BerlekampMassey}
proves that \cref{step:BM} correctly computes the exponents of $(f-f^*)\bmod x^p-1$.
\Cref{fact:vandermonde} proves that \cref{step:vdm,step:vdm2} correctly compute the sparse
representations of $(f-f^*)\bmod\langle x^p-1,q^{2k}\rangle$ and its shifted counterpart.
Therefore, since $k$ is large enough, \cref{fact:reconstruction} ensures that
\cref{step:reconstruction} computes all the collision-free terms of $(f-f^*)$ plus some
erroneous terms. By assumption, the number of collisions of $(f-f^*)\bmod x^p-1$ is at
most $\frac{1}{3}t$. Since collisions involve at least two terms, the number of
colliding terms in $(f-f^*)\bmod x^p-1$ is at most $\frac{t}{6}$. Therefore, the tentative
terms at \cref{step:reconstruction} contain at least $\frac{2}{3}t$ correct terms and at
most $\frac{1}{6}t$ incorrect terms. In other words, the number of terms in $(f-f^*)$ at
the end of the iteration is at most $t-\frac{2}{3}t+\frac{1}{6}t = \frac{1}{2}t$. After
$\log T$ iterations, $f = f^*$.

To improve the success probability, we repeat the algorithm $48\rho/\log e$ times and
return the majority polynomial. Let $C$ be
the number of
repetitions that produce the correct polynomial. Since each repetition is correct with
probability at least $\frac{2}{3}$, $\mathbb E[C] = \frac{32\rho}{\log e}$. Therefore, by Chernoff
bound, the probability that the correct polynomial is produced by less than half of the
repetitions is $\mathrm{Pr}[C\le \frac{24\rho}{\log e}] = \mathrm{Pr}[C\le (1-\frac{1}{4})\mathbb
E[C]] \le \exp(-(\frac{1}{4})^2\mathbb E[C]/2) = \frac{1}{2^\rho}$. 
\end{proof}

\begin{proof}[Complexity]
  Each iteration require $3T$ probes to the MBB (with the current value of $T$). Hence the total number of probes is $<
  6T$. The evaluations of $f^*$ at powers of $\omega$ and $\omega_k$ require $\softoh{t\log p} = \softoh{T\log\log DH}$
  operations in $\GF q$ or $\ZZ/q^{2k}\ZZ$ by \cref{fact:vandermonde}. Apart from the evaluations, \cref{step:BM}
  requires $\softoh{p} = \softoh{T\log DH}$ operations in $\GF q$ using \cref{fact:BerlekampMassey} and
  \cref{step:vdm,step:vdm2} require $\softoh{T\log p} = \softoh{T\log\log DH}$ operations in $\ZZ/q^{2k}\ZZ$ using
  \cref{fact:vandermonde}.

The bit cost of each arithmetic operation  is $\softoh{\log q} = \softoh{\log(T\log D)+\log\log H)}$ for those in $\GF q$, and
$\softoh{k\log q} = \softoh{\log D+\log H}$ for those in $\ZZ/q^{2k}\ZZ$. If the MBB is implemented with an SLP, the overall bit
complexity, dominated by the evaluations of the SLP, is $\softoh{LT(\log D+\log H)}$.  Note that computing $p$, $q$, $\omega$ and
$\omega_k$ is cheap, since $p$, $q$ are rather small.
\end{proof}

Our algorithm is randomized of Monte Carlo type since it may return an incorrect answer, in addition to fail. To get a Las Vegas
variant, the algorithm should only be allowed to fail. 
For, we need a verification procedure that itself is a Las Vegas
algorithm. The problem to solve is then: Given an MBB for a polynomial $f$ and a sparse polynomial $f^*$, determine whether
$f = f^*$.  \Citet{Blaser:2009} provide deterministic algorithms
for this task but with polynomial, and not
quasi-linear complexity. Another approach relies on the same tools as Ben-Or--Tiwari algorithm. If both $f$ and $f^*$ have
sparsity at most $T$ and degree at most $D$, and $\omega$ is an element of order at least $D$, then $f-f^*$ vanishes on $1$,
$\omega$, \dots, $\omega^{2T-1}$ if and only if $f = f^*$ (\emph{cf.} for instance \cite{arn16}). It is deterministic as long as
an element of large order can be computed deterministically.

For a polynomial over $\ZZ$, we must evaluate $f$ and $f^*$ modulo some
integer $m$ to avoid expression swell. As before, we can produce a triple
$(p,q,\omega)$ such that $\omega$ is a $p$-PRU in $\GF q$. Since $\omega$
should have order $\ge D$, we take a random prime $p\ge D$, and $q\ge H$
so that the coefficients do not vanish modulo $q$. This can be done in time
$\polylog(D+H)$. Then, evaluating 
$f^*$ on $1$, $\omega$, \dots, $\omega^{2T-1}$ requires $2T$ probes to the MBB for $f$,
and $\oh{T\log D}$ operations in $\GF q$ for $f^*$. If $f$ is represented by an SLP of
length $L$, the bit complexity becomes $\softoh{LT\log(D+H) + T\log(D)\log(D+H)}$. Note
that this complexity is quadratic in $\log D$.

Altogether, we obtain a Las Vegas algorithm using $\oh{T}$ probes, $\oh{T\log D}$ operations in $\GF q$ and
$\polylog(D+H)$ bit operations, with a constant probability of failure. If $f$ is represented by an SLP, the bit
complexity
is $\softoh{LT\log(D+H)+T\log(D)\log(D+H)}$. Using repetition, we
obtain an algorithm that never fails, with the same \emph{expected} complexity.

It is an intriguing open question whether a quasi-linear Las Vegas algorithm exists.
In particular, can we verify an equality $f = f^*$ where $f$ is given by an SLP and $f^*$
is sparse, in quasi-linear time?

\section{Exact division}\label{sec:divis}

Given two sparse polynomials $f$ and $g$ such that $g$ divides $f$, the problem of computing $f/g$ can be seen as a sparse
interpolation of a specific SLP that has a single division.  As shown in~\citet{ggp21} some sparse interpolation algorithms can be
carefully adapted to produce division algorithms if there is no remainder. As the interpolation algorithms they rely on, these
division algorithms are not quasi-linear in the input plus the output bit-size.
In this section we show how to adapt of our quasi-linear interpolation algorithm to derive fast sparse polynomial exact
division. As a result, we obtain the first quasi-linear exact division algorithm for sparse polynomial over the integers.

There are three main difficulties in adapting our interpolation algorithm.  First, no bound is given for $\#(f/g)$ except the
potentially exponential degree one.
Second, we do not know the height of $f/g$ while
the interpolation algorithm depends on it.  Last, to evaluate the quotient $f/g$ at a root of unity $\omega$, we compute both
$f(\omega)$ and $g(\omega)$ and perform the division.  Hence, $\omega$ must not be a root of $g$. 

To overcome the first difficulty, we use the same method as~\citet{ggp20,ggp21}.  We guess a sparsity bound for the quotient,
interpolate a candidate quotient assuming the bound, and check its correctness \emph{a posteriori} with a probabilistic
verification. In case of failure we double the sparsity bound and start again.

Besides verifying products of sparse polynomials, we will also need in our algorithm an efficient verification of sparse polynomial
product modulo a binomial.  Such algorithms have been recently proposed by some of the authors in \cite{ggp22}, and we recall the
useful results below.

\begin{fact}[\citet{ggp22}]\label{fact:verif}
    There exists a Monte Carlo algorithm
        that,
    given three $t$-sparse degree-$D$ polynomials $f,g,h\in\ZZ[x]$ of height $\le H$,
    and $\rho\ge 1$, verify if $f=gh$.
    The algorithm can give a wrong answer with probability at most $\frac{1}{2^\rho}$ when $f\neq gh$. Its bit complexity is $\softoh{t(\log D+\log H+\rho)+\rho^4}$.

        There exists a Monte Carlo algorithm that similarly tests if $f = gh\bmod x^D-1$, with the same error probability and bit complexity $\softoh{t\rho\log D+t\log H+\rho^4\log^3D}$.
\end{fact}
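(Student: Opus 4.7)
The plan is to reduce both verifications to polynomial identity testing for the difference polynomial $d := f - gh$, which has degree at most $2D$, sparsity at most $t^2+t$, and coefficient height bounded by $H + tH^2$. The point is to detect whether $d\equiv 0$ (respectively $d\equiv 0\bmod x^D-1$) without ever computing $gh$ explicitly, using only two sparse evaluations of the input polynomials.

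For the first part, I would pick a random prime $q$ of bit-length $\oh{\rho+\log(DH)}$ and a random $\theta\in\{0,\ldots,q-1\}$, then evaluate $f(\theta)$, $g(\theta)$, $h(\theta)$ modulo $q$ by standard sparse evaluation. Each of the $\le 3t$ terms costs one modular exponentiation $\theta^{e_i}\bmod q$ in $\softoh{\log D\cdot\log q}$ bit operations, giving total cost $\softoh{t(\log D+\log H+\rho)}$, to which the $\rho^4$ term is added for prime generation. I then check whether $f(\theta)\equiv g(\theta)h(\theta)\pmod q$. If $f=gh$ the test always passes; if $f\neq gh$ the event ``$d(\theta)\bmod q=0$'' decomposes into two independent failure sources, namely ``$d(\theta)=0$ over $\ZZ$'', bounded by $\oh{D/q}$ via Schwartz--Zippel, and ``$q\divides d(\theta)\neq 0$'', bounded by $\oh{\log|d(\theta)|/q}$ through the prime-number-theorem-style count of prime divisors of a fixed integer. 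Calibrating $q$ so both are $\le 2^{-\rho-1}$ yields the claimed error bound.

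For the second part, the polynomial $d':=(f-gh)\bmod x^D-1$ again has degree $<D$ and sparsity $\le t^2+t$, but evaluation at arbitrary $\theta$ destroys the quotient structure. The workaround is to further reduce modulo $x^p-1$ for a random prime $p$ of size $\oh{\rho\log D}$: by the same collision argument used in \cref{fact:collisions}, if $d'\neq 0$ then $d'\bmod x^p-1\neq 0$ with probability $1-\oh{2^{-\rho}}$. I then evaluate $f$, $g$, $h$ at a random power of a $p$-PRU $\omega\in\GF{q'}$ (with $q'$ chosen as in \cref{lem:mcpap}) and compare the values. The dominant cost is $\softoh{t\rho\log D}$ for the $\le 3t$ exponentiations of exponents reduced modulo $p=\oh{\rho\log D}$, plus $t\log H$ for coefficient handling, plus $\rho^4\log^3 D$ for prime generation and the required repetitions.

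The main obstacle is calibrating the sizes of $p$, $q$, and the repetition counts so that the various independent failure sources (non-generation of a good prime triple, vanishing of $d$ at the chosen point, and spurious coincidence modulo $q$ or modulo $x^p-1$) combine into a single $2^{-\rho}$ bound without blowing up the bit budget. This is particularly delicate in the modular case, where the collision analysis of reduction modulo $x^p-1$ must be decoupled from the random-evaluation analysis, and where the $\rho^4\log^3 D$ overhead precisely reflects the interplay between these two stages.
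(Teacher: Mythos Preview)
The paper does not prove this statement: it is quoted verbatim as a result of \cite{ggp22} and used as a black box, so there is no proof in the paper to compare against.

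That said, your sketch for the first part has a real complexity gap. You correctly compute the cost of $3t$ modular exponentiations $\theta^{e_i}\bmod q$ as $\softoh{t\log D\cdot\log q}$ bit operations, but then assert this equals $\softoh{t(\log D+\log H+\rho)}$. With $\log q=\Theta(\rho+\log D+\log H)$ the product $\log D\cdot\log q$ contains a $\log^2 D$ term, so your method actually costs $\softoh{t\log D\,(\log D+\log H+\rho)}$, a full $\log D$ factor above the stated bound. In the supersparse regime $\log D$ is a genuine parameter, not a polylog factor absorbed by the soft-Oh; removing this extra $\log D$ is exactly the nontrivial content of \cite{ggp22}, and a plain Schwartz--Zippel evaluation modulo a large prime cannot achieve it.

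Your second sketch has an analogous issue. You take $p=\oh{\rho\log D}$ and invoke the collision argument of \cref{fact:collisions} to conclude that $d'\bmod x^p-1\ne 0$ with probability $1-\oh{2^{-\rho}}$. But that argument requires $p$ to scale with the sparsity of the polynomial being reduced, and $d'=(f-gh)\bmod x^D-1$ can have up to $t^2+t$ terms, not $t$. With $p$ independent of $t$ you cannot control the probability that all terms of $d'$ cancel modulo $x^p-1$. The stated complexity $\softoh{t\rho\log D+t\log H+\rho^4\log^3 D}$ reflects a more careful balancing of these parameters than your outline provides.
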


A similar \emph{guess and check} method can be used to determine an appropriate bound for the height of the quotient: Start with a
small bound and increase it when necessary. Indeed, \cref{step:vdm} of algorithm \mcinterpolate correctly computes the polynomial
modulo $x^{p}-1$ as soon as $q^{2k}$ is greater than its height. There, verifying the sparse product modulo $x^p-1$ allows us to
determine if the bound on the height is large enough. This method is necessary as the bound we have for the height is exponential.

\begin{fact}[\citet{ggp21}]\label{fact:coeffbound}
    Let $f, g, q \in \ZZ[x]$ be three sparse polynomials such that
    $f=gq$. Then the height $H_q$ of $q$ satisfies $H_q \le (H_g+1)^{\lceil\frac{t-1}{2}\rceil}H_f$ where $t=\# q$ and $H_f$, $H_g$ are the respective heights of $f$ and $g$.
\end{fact}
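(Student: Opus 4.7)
Write $q = \sum_{i=1}^t c_i x^{e_i}$ with $e_1 < e_2 < \cdots < e_t$, and denote by $d_-$ and $d_+$ the lowest and highest exponents of $g$, with corresponding (nonzero integer) coefficients $a_-$ and $a_+$. The strategy is to bound $|c_i|$ by extracting two identities from $f = gq$: expanding the coefficient of $x^{e_k + d_-}$ in $f$ gives $f_{e_k+d_-} = c_k a_- + \sum_{j<k} c_j \, g_{e_k + d_- - e_j}$ (only $c_j$ with $j<k$ appear, because lower-degree terms of $g$ are truncated at $d_-$), and symmetrically the coefficient of $x^{e_k + d_+}$ gives $f_{e_k + d_+} = c_k a_+ + \sum_{j>k} c_j \, g_{e_k + d_+ - e_j}$. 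Since $|a_\pm|\ge 1$, these yield $|c_k| \le H_f + H_g \sum_{j<k}|c_j|$ (bottom identity) and $|c_k| \le H_f + H_g \sum_{j>k}|c_j|$ (top identity).

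The idea is then to bound the coefficients of $q$ by working inward from both ends, using whichever of the two identities references only already-bounded coefficients. Processing from the bottom, an easy induction shows $|c_k| \le H_f(1+H_g)^{k-1}$: indeed, if the bound holds for $c_1,\dots,c_{k-1}$, then
\[
|c_k| \le H_f + H_g \sum_{j=0}^{k-2} H_f(1+H_g)^j = H_f + H_f\bigl((1+H_g)^{k-1}-1\bigr) = H_f(1+H_g)^{k-1}.
\]
Symmetrically, processing from the top gives $|c_{t-k+1}| \le H_f(1+H_g)^{k-1}$.

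Now let $s = \lceil (t-1)/2\rceil$. If $t$ is even, say $t = 2s$, bottom-processing covers $c_1,\ldots,c_s$ and top-processing covers $c_{s+1},\ldots,c_t$, so every coefficient is bounded by $H_f(1+H_g)^{s-1} \le H_f(1+H_g)^s$. If $t$ is odd, say $t = 2s+1$, bottom-processing covers $c_1,\ldots,c_s$ and top-processing covers $c_{s+2},\ldots,c_t$, each bounded by $H_f(1+H_g)^{s-1}$; for the remaining middle coefficient $c_{s+1}$, the bottom identity yields
\[
|c_{s+1}| \le H_f + H_g \sum_{j=1}^{s} H_f(1+H_g)^{j-1} = H_f(1+H_g)^s.
\]
In either case $|c_k| \le H_f(1+H_g)^s$ for all $k$, giving $H_q \le (H_g+1)^{\lceil (t-1)/2\rceil} H_f$ as claimed.

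The only subtle step is making sure that the inward processing schedule respects the dependency pattern of the two identities, i.e.\ that whenever we invoke the bottom identity for $c_k$ we have already bounded $c_1,\ldots,c_{k-1}$ (and symmetrically for the top). This is the main, though modest, obstacle; the rest is a direct induction plus the closed-form evaluation of the geometric sum.
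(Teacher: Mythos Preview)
The paper does not actually prove this statement: it is stated as a \emph{Fact} with a citation to \cite{ggp21} and no argument is given here. So there is no in-paper proof to compare against.

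That said, your argument is correct and self-contained. The two identities you extract from $f=gq$ at the exponents $e_k+d_-$ and $e_k+d_+$ are exactly right: only terms $c_j$ with $j\le k$ (resp.\ $j\ge k$) can contribute, since $g$ has no monomials below $d_-$ (resp.\ above $d_+$), and the integrality $|a_\pm|\ge 1$ lets you drop the leading factor. The geometric-sum induction giving $|c_k|\le H_f(1+H_g)^{k-1}$ from below and $|c_k|\le H_f(1+H_g)^{t-k}$ from above is clean, and taking the minimum yields $|c_k|\le H_f(1+H_g)^{\min(k-1,\,t-k)}$, whose maximum over $k$ is exactly $\lceil(t-1)/2\rceil$ when $t$ is odd and $\lceil(t-1)/2\rceil-1$ when $t$ is even (so in fact slightly sharper than needed in the even case). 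Your case split and the treatment of the middle coefficient are fine; the ``subtle step'' you flag is really no obstacle, since the bottom and top inductions are completely independent and each holds for all $k$.
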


For the last difficulty, we want $g(\omega)\neq 0$ for any $p$th primitive root of unity $\omega$ in $\GF{q}$. That is, we want
$g$ to be coprime with the $p$th cyclotomic polynomial $\Phi_p=\sum_{i=0}^{p-1}x^i$ in $\GF{q}[x]$.
In $\ZZ[x]$, if $p$ is a prime larger than $\#g$ such that $g \bmod x^p-1 \neq 0$, then $g$ and $\Phi_p$ are coprime.  If $p$ is
taken at random and large enough, namely $p = \Omega(\#g\log(\deg g))$, \cref{fact:collisions} ensures that $g\bmod x^p-1\neq 0$
with good probability.
Then, $g$ and $\Phi_p$ are coprime in $\GF{q}[x]$ if and only if $q$
does not divide their resultant, an integer bounded by 
$(\#g \cdot H_g)^{p-1}$ where $H_g$ is the height of $g$. 
We can therefore choose two primes $p$ and $q$ so
that $g$ and $\Phi_p$ are coprime in $\GF q[x]$ with good probability, using \cref{lem:mcpap}.

We first describe an algorithm to compute an exact quotient with a given
bound on its sparsity but no precise bound on its height.

\begin{algorithm}
    \LinesNumbered
    \caption{\divisionWithSparsity\label{alg:divT}}
  \Input{two sparse polynomials $f$, $g\in\ZZ[x]$ such that $f$ has degree $D$ and $g$ divides $f$; an integer $T$}
    \Output{$f/g$ with probability at least $\frac{2}{3}$, if $T\ge \#(f/g)$}
  \BlankLine

    $H_{max} \gets(1+H_g)^{\lceil\frac{1}{2}(T-1)\rceil}\cdot H_f$ where $H_f$, $H_g$ are the heights of $f$ and $g$ \;
    $\epsilon \gets \frac{1}{15}(\ceil{\log T} + \ceil{\log \log H_{max}})$; $C\gets H_{max}\cdot\#gH_g$\;
    $\lambda \gets \max\left(\frac{2^{58}}{\epsilon^2},\frac{5}{\epsilon}(\max(T,\#g)-1)\ln D, \sqrt[4]{\frac{96}{\epsilon}\ln C} \right)$\;
        $h\gets 0$; $H_0\gets H_g+1$\;
    \While{$T \ge 1$}{
        Compute $h_p=(f/g-h)\bmod \langle x^p-1,q^{2k}\rangle$ as in \mcinterpolate, where $\lambda<p<2\lambda$, $q\le\lambda^6$ and $k = \lceil\max(\frac{1}{2}\log_q(2H_0H_f),\log_qD)\rceil$
                \label{step:rp}\;
        Test if $f\bmod x^p-1=g\times(h_p+h) \bmod x^p-1$, with error probability $\le \frac{1}{\epsilon}$, using \cref{fact:verif}  \label{step:test}\;
    \If{the test returns \true}
    {
        Compute tentative terms of $f/g-h$\;
        Add the terms of height $\le H_{max}$ to $h$\;
        $T\gets \lfloor T/2 \rfloor$ \;
    }
    \lElse{
        $H_0 \gets H_0^2$
        }
  }
  \KwRet{$r$}
\end{algorithm}

The algorithm can return an erroneous polynomial by adding false terms. However
this polynomial cannot be much larger than the correct polynomial.

\begin{lemma}\label{lem:errorsize}
    Algorithm \divisionWithSparsity always returns a polynomial with at most
    $2T$ terms and height at most $T\cdot tH$ where $t$ and $H$ are the actual sparsity and
    height of the quotient we intend to compute.
\end{lemma}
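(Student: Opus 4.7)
The plan is to establish the two bounds separately.

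For the sparsity bound, I would observe that $h$ is extended only in the branch of the loop that runs when the verification at line~\ref{step:test} returns \true, and that this branch halves $T$. The algorithm therefore performs at most $\lceil \log T\rceil + 1$ such successful iterations, during which the variable $T$ takes the successive values $T, \lfloor T/2\rfloor, \ldots, 1$. At each successful iteration, the set of tentative terms produced has cardinality at most the current value of $T$, since it is extracted from a transposed Vandermonde system of that size, exactly as in \mcinterpolate. Summing over iterations gives a total of at most $T + \lfloor T/2\rfloor + \cdots + 1 < 2T$ terms ever added to $h$, which bounds the output sparsity.

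For the height bound, I would work with the $\ell^1$ norm $\|h\|_1 = \sum_e |[h]_e|$, which dominates the height. The essential observation is that whenever the verification at line~\ref{step:test} succeeds, the computed $h_p$ must equal $(f/g - h)\bmod(x^p-1)$ as a polynomial in $\ZZ[x]$: verification certifies the identity $f \equiv g\cdot(h+h_p)\pmod{x^p-1}$ over $\ZZ[x]$, and since $p,q$ are chosen (via \cref{fact:collisions} and \cref{lem:mcpap}) so that $g$ is coprime to $\Phi_p$, this identity determines $h_p$ uniquely. Subadditivity of $\ell^1$ under reduction modulo $x^p-1$ then yields
\[\|h_p\|_1 \le \|f/g - h\|_1 \le tH + \|h\|_1.\]
The increment $\delta$ added to $h$ consists of at most the current value of $T$ tentative terms whose coefficients are (distinct) coefficients of $h_p$, obtained from the shift trick, so $\|\delta\|_1 \le \|h_p\|_1 \le tH + \|h\|_1$.

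Unrolling the resulting recurrence $\|h_{i+1}\|_1 \le 2\|h_i\|_1 + tH$ from $\|h_0\|_1 = 0$ over the $O(\log T)$ successful iterations yields $\|h\|_\infty \le \|h\|_1 = O(T\cdot tH)$, which is the claimed bound. The main technical delicacy I anticipate will be justifying the exact integer equality $h_p = (f/g - h)\bmod(x^p-1)$ whenever verification succeeds: this requires both that the modulus $q^{2k}$ be large enough to preclude modular wrap-around of the true coefficients---which is precisely what the $H_0$-doubling branch of the loop enforces, guaranteeing $q^{2k} \ge 2H_0 H_f$ for progressively larger $H_0$---and that $g$ remain coprime to $\Phi_p$ modulo $q$, which follows from the choice of $(p,q)$ via the prime-generation setup. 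Once these conditions are in hand, the $\ell^1$ recurrence is a straightforward induction.
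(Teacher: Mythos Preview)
Your sparsity argument matches the paper's exactly, and for the height you arrive at the same $\ell^1$-doubling recursion the paper uses.

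The gap is in how you plan to justify the key inequality $\|\delta\|_1 \le \|f/g - h\|_1$. You assert that whenever the test at \cref{step:test} returns \true, one has $h_p = (f/g - h)\bmod(x^p-1)$ over $\ZZ[x]$, because verification ``certifies'' the product identity and $(p,q)$ are chosen so that $g$ is coprime to $\Phi_p$. Neither premise holds unconditionally: the verification of \cref{fact:verif} is one-sided Monte Carlo and may return \true\ on a false instance, and coprimality of $g$ with $\Phi_p$ is only guaranteed with high probability over the random draw of $(p,q)$. Since the lemma asserts an \emph{always} bound, an argument that conditions on these random events does not establish it.

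The paper's proof never invokes the verification step. It observes directly that every tentative coefficient added to $h$ is a coefficient of $h_p$, and each coefficient of $h_p$ arises as a sum of coefficients of $f/g-h$ whose exponents collide modulo $p$; hence $\|\delta\|_1 \le \|h_p\|_1 \le \|f/g-h\|_1$ irrespective of whether the test was fooled. Even if $q^{2k}$ is too small and wrap-around occurs, reducing an integer to the symmetric range modulo $q^{2k}$ can only shrink its absolute value, so the bound persists. Replacing your verification-based justification with this direct structural observation makes your recursion unconditional and completes the argument.
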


\begin{proof}
    For the sparsity, \cref{step:rp} uses a Vandermonde system to interpolate a sparse polynomial
    of sparsity at most $T$ and cannot compute more than $T$ monomials. Therefore, as $T$ is
    divided by $2$ every time we add new terms to $h$, the result has
    at most $2T$ terms.

    For the height, only erroneous terms can have coefficients larger than $H$. However
    those terms necessarily come from collisions. Hence at each iteration, the sum
        of the erroneous terms is at most equal to the sum of the terms of $f/g-h$. Initially,
        $h = 0$ and the sum is bounded by $tH$. At each iteration, erroneous terms can at most
        double the sum. After $\ceil{\log T}$ iteration, the sum is bounded by $T\cdot tH$ and
        so is the height of $h$.
\end{proof}

\begin{theorem}\label{thm:divT}
  Algorithm \divisionWithSparsity works as specified. Its bit complexity is $\softoh{(T+\#f+\#g)(\log D+\log H)}$ where
  $D=\deg(f)$ and $H$ bounds the height of $f$, $g$ and $f/g$.

  For any $\rho\ge 1$, $\oh{\rho}$ repetitions of the algorithm improve the success probability to $1-\frac{1}{2^\rho}$.
\end{theorem}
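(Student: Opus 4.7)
The plan is to adapt the correctness and complexity analysis of \mcinterpolate (\cref{thm:MCI}) to the division setting, where two new difficulties arise: the height of $f/g$ is unknown and must be discovered via a doubling schedule on $H_0$, and acceptance of each iteration now depends on the sparse modular product verification of \cref{fact:verif}.

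For correctness, I would first bound the total number of outer iterations by $N := \ceil{\log T} + \ceil{\log\log H_{max}}$, since each iteration either halves $T$ (when the verification accepts) or squares $H_0$ (when it rejects), and starting from $H_0 = H_g+1$, at most $\ceil{\log\log H_{max}}$ squarings are needed to exceed the true height $H$ of $f/g$ (bounded by $H_{max}$ via \cref{fact:coeffbound}). Next, I would enumerate the sources of failure at a single iteration: (i) \cref{lem:mcpap} fails to return a valid triple $(p,q,\omega)$, or returns one in which $q$ divides the resultant of $g$ and $\Phi_p$ (so that $g(\omega)=0$); (ii) too many exponents of $(f/g-h)\bmod x^p-1$ collide, violating \cref{fact:collisions}; (iii) some coefficient of $(f/g-h)\bmod x^p-1$ vanishes modulo $q$; (iv) the modular product verification produces a false positive. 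Each is at most $\epsilon$ by the choice of $\lambda$ and by calibrating the verification error to $\epsilon$ (setting its parameter to $\Theta(\log(1/\epsilon))$). Hence the total failure probability is at most $N\cdot 4\epsilon\le\tfrac{1}{3}$.

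When no failure occurs and $H_0\ge H$, the reconstruction step behaves exactly as in \mcinterpolate: at least $\tfrac{2t}{3}$ collision-free terms of $f/g-h$ are recovered and at most $\tfrac{t}{6}$ erroneous terms (all of height $\le H_{max}$) are added, so the sparsity of $f/g-h$ drops by a factor of two. When $H_0<H$, the recovered $h_p+h$ differs from $f/g\bmod x^p-1$ and the verification rejects, triggering a doubling of $H_0$. Thus after at most $N$ iterations, $h=f/g$. For the complexity, each iteration's dominant costs are: evaluating $f$ and $g$ at $O(T)$ geometric points in $\ZZ/q^{2k}\ZZ$ and $\GF q$ (via \cref{fact:vandermonde}), the Berlekamp--Massey step over $\GF q$ (via \cref{fact:BerlekampMassey}), the two transposed-Vandermonde solves, and the modular product verification of \cref{fact:verif}. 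Each of these is $\softoh{(T+\#f+\#g)(\log D+\log H)}$ bit operations, and summing over $N=O(\log T+\log\log H)$ iterations preserves this bound since the $N$ factor is polylogarithmic and absorbed into $\softoh{}$. Probability amplification to $1-2^{-\rho}$ is achieved by a majority vote across $O(\rho)$ independent runs as in \cref{thm:MCI}.

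The main obstacle is the coupled analysis of the two kinds of iterations: I must show that a ``height-failure'' iteration is reliably detected by the verification step rather than being conflated with a valid iteration that happens to introduce erroneous terms, and that the $O(\log\log H)$ rejecting iterations do not disturb the correct terms already placed in $h$. Simultaneously, the per-iteration error budget $\epsilon=\Theta(1/(\log T+\log\log H_{max}))$ must remain only polylogarithmically small so that the bounds on $\lambda$, $k$, and the verification parameter remain efficient and the global success bound of $\tfrac{2}{3}$ is preserved.
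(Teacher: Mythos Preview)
Your overall plan matches the paper's proof: bound the number of iterations by $N=\ceil{\log T}+\ceil{\log\log H_{max}}$, list the per-iteration failure modes, apply a union bound, and then reuse the complexity analysis of \mcinterpolate together with \cref{fact:verif}. The structure is right.

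There is, however, a real gap in your treatment of the case $H_0<H$. You claim that in this regime ``the recovered $h_p+h$ differs from $f/g\bmod x^p-1$ and the verification rejects.'' This is not true in general, and attempting to prove it (as your final paragraph anticipates) will fail. Even when $q^{2k}$ is too small to represent the height of $f/g-h$, the polynomial $(f/g-h)\bmod x^p-1$ may nonetheless have all its coefficients smaller than $q^{2k}/2$; this happens precisely when every term of $f/g-h$ whose coefficient exceeds the current height budget collides with some other term modulo $x^p-1$. In that event $h_p=(f/g-h)\bmod x^p-1$ holds over $\ZZ$, the test at \cref{step:test} \emph{correctly} returns \true, and the algorithm adds tentative terms and halves $T$. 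The paper handles this case explicitly: since the large-height terms are necessarily among the colliding ones, every collision-free term still satisfies the hypotheses of \cref{fact:reconstruction} (its coefficient fits in $q^{2k}$ and $k\ge\log_q D$), so the iteration makes the same progress as a ``normal'' iteration. You need this observation, not a rejection guarantee; otherwise your case analysis is incomplete and your iteration bound is unjustified.

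A minor bookkeeping point: your per-iteration bound of $4\epsilon$ undercounts. Your item (i) bundles two distinct events---failure of \cref{lem:mcpap} to produce a valid triple, and $q$ dividing the resultant of $g$ and $\Phi_p$---each of probability up to $\epsilon$. The paper arrives at $5\epsilon$ per iteration, consistent with the choice $\epsilon=\tfrac{1}{15}(\ceil{\log T}+\ceil{\log\log H_{max}})^{-1}$ in the algorithm.
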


\begin{proof}[Correctness]
  The algorithm may fail for five distinct reasons. The first three reasons are the same as in \mcinterpolate: It may fail to
  compute the triple $(p,q,\omega)$ required to compute $h_p$; The prime $p$ may cause too many collisions in
  $f/g-h\bmod (x^p-1)$; some terms of $f/g-h\bmod (x^p-1)$ may vanish modulo $q$.  The two other sources of failure are specific
  to this algorithm: One of the powers of $\omega$ or $\omega_k$ may be a root of $g$; The test at \cref{step:test} may fail to
  detect an error.

  The choice of $\lambda \ge\sqrt[4]{\frac{96}{\epsilon}\ln C}$ implies $\lambda \ge \sqrt[5]{\frac{48}{\epsilon}
    \ln(C^{2\lambda})}$. \Cref{fact:collisions,lem:mcpap} ensure that, with probability at least $1-3\epsilon$, the
  algorithm successfully produces a triple $(p,q,\omega)$ such that $p$ does not cause too many collisions and $q$ does
  not divide an unknown integer of value at most $C^p$.  
        If $p$ does not cause too many collisions, $g\bmod x^p-1\neq 0$. Since
        $\#g< p$, $g$ and $\Phi_p=\sum_{i=0}^{p-1}x^i$ are coprime in
        $\ZZ[x]$.  The resultant of $g$ and $\Phi_p$ is at most $(\#g H_g)^p$.
        Moreover, since $H_{max}$ bounds the height of both $h$ and $f/g$ using
        \cref{fact:coeffbound}, and since $p > T$, the height of $(f/g-h)\bmod
        x^p-1$ is at most $H_{max}^p$. Hence with probability at least
        $1-\epsilon$, $q$ does not divide the resultant of $g$ and $\Phi_p$ nor
        any coefficient of $(f/g-h)\bmod x^p-1$. In particular, $g$ and
    $\Phi_p$ remain coprime in $\GF q$ and so in $\ZZ/q^{2k}\ZZ$ since $p$-PRU
    in $\ZZ/q^{2k}$ are also $p$-PRU in $\GF q$.
        
        Altogether, the four following properties hold with probability at least $1-4\epsilon$:
        The algorithm
        succeeds in producing two primes $p$, $q$ and $\omega\in\GF q$;  
    $g$ and $\Phi_p$ are coprime in $\GF q[x]$ and in $\ZZ/q^{2k}\ZZ$; 
        There are few collisions in $f/g-h$ modulo $x^p-1$; $q$ does not divide any of the
        coefficients of $(f/g-h) \bmod x^p-1$.

        If all these conditions hold, we can use Facts~\ref{fact:BerlekampMassey}
        and~\ref{fact:vandermonde} to compute $h_p$. The choice of $k$ implies
        that $q^{2k}$ is larger than twice the height of $f/g-h$ as soon as
        $H_0$ is larger than the (unknown) height $H$ of $f/g$. In that case,
        the equality $h_p = f/g-h$ holds in $\ZZ[x]$ and the test at
        \cref{step:test} returns \true. Computing tentative terms and updating
        $h$ can then be done exactly as in \mcinterpolate. 
    
        If $H_0 < H$, there are two possibilities. Either $h_p\neq f/g-h \bmod
        x^p-1$ in $\ZZ[x]$. With probability at least $1-\epsilon$, the test
        detects that and $H_0$ is squared. Or the equality indeed holds. This
        means that the terms of $f/g-h$ that have a larger height collide
        modulo $x^p-1$. Hence, the collision-free terms are correctly computed.

        Consequently, the loop works correctly with probability $1-5\epsilon$:
        Either the number of terms that remain to be computed is halved, or the
        height bound is squared if it was too small. 
        At most $\ceil{\log\log H} \le \ceil{\log\log H_{max}}$ iterations
        where the test returns \false are needed to get to a correct bound
        $H_0\ge H$,
        and at most $\ceil{\log T}$ iterations where the test returns \true are
        needed to 
        to compute all the coefficients.
        Therefore the algorithm performs at most $(\ceil {\log T} + \ceil{\log
        \log H_{max}})$ iterations.  Its success probability is at least
        $1-5\epsilon(\ceil {\log T} + \ceil{\log \log H_{max}})\ge
        \frac{2}{3}$.
        To improve the success probability, we repeat the algorithm
        $48\rho/\log e$ times and return the majority polynomial, as in
        \mcinterpolate.
\end{proof}

\begin{proof}[Complexity]
    Since the number of iterations is logarithmic in the input and output size, the complexity of the
    algorithm is given by the complexity of one iteration. 
    As in \mcinterpolate, the algorithm requires $\softoh{T+p}$ operations in $\GF q$ 
    and $\softoh{T\log p}$ operations $\ZZ/q^{2k}\ZZ$ for  the evaluations of $h$, computing the exponents
    modulo $p$ and then retrieving the coefficients and the entire exponents.
        The evaluations of $f/g$ require $\softoh{(T+\#f+\#g)\log p}$ operations
    in both domains by \cref{fact:vandermonde} plus $\oh{\#f+\#g}$ operations in $\ZZ$ to
    reduce the initial coefficients and degree.
    As the height of an erroneous answer is at most $T^2H$ by \cref{lem:errorsize}, the maximal value of $q^{2k}$
    is $\oh{T^2H+D}$. Therefore arithmetic operations in $\ZZ/q^{2k}\ZZ$ have bit cost $\softoh{\log H+\log D}$.
    Moreover the choice of $\lambda$ ensures that $p=\softoh{(T+\#g)(\log D+\log H)}$.
    As $q$ is polynomial in $p$ this leads to a total bit complexity of $\softoh{(T+\#f+\#g)(\log D+\log H)}$.
\end{proof}

Our main division algorithm uses \divisionWithSparsity with growing sparsity bound until a result is found.

\begin{algorithm}
  \LinesNumbered
    \caption{\exactDivision\label{alg:div}}
  \Input{$f$, $g\in\ZZ[x]$, such that $g$ divides $f$, $\rho\ge 1$}
    \Output{$f/g$ with probability at least $1-\frac{1}{2^{\rho+1}}$}
  \BlankLine

    $T\gets 1$\;
    \While{\true}{
        $T\gets 2T$\;
    Compute  $\oh{\rho}$  candidates $h$ for $f/g$ using \cref{alg:divT} with sparsity bound $T$ and keep the most
    frequent one\;

    Test if $f = gh$ using the algorithm from \cref{fact:verif},  setting its failure probability to $\tfrac{1}{2^{\rho+1}T}$ \;
    If the test returns \true, \KwRet{$h$}\;
        }
\end{algorithm}

\begin{theorem}\label{thm:exactdiv} 
  Let $f$, $g$ be sparse polynomials in $\ZZ[x]$ such that $g$ divides $f$, $H$ be a bound on the height of $f,g$ and
  $f/g$, and $\rho\ge 1$. With probability at least $1-\frac{1}{2^\rho}$, Algorithm \exactDivision returns $f/g$ in
  $\softoh{(\#(f/g)+\#f+\#g)(\log D+\log H+\rho)+\rho^4}$ bit operations.

\end{theorem}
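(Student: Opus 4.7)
The plan is to analyze \cref{alg:div} iteration by iteration. Setting $t := \#(f/g)$, the algorithm doubles its sparsity bound $T$ at every iteration, so the first $T$-value satisfying $T \ge t$ is reached after at most $\ceil{\log t}+1$ iterations; I will show that with high probability the algorithm terminates correctly at or shortly after that point.

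For correctness, I would observe that the only way \exactDivision can return a wrong polynomial is that at some iteration with bound $T$, the candidate $h$ differs from $f/g$ yet the verification of \cref{fact:verif} returns \true (a false positive). Since the verification parameter is calibrated to $1/(2^{\rho+1}T)$, this event has probability at most $1/(2^{\rho+1}T)$ per iteration; a union bound over $T = 2, 4, 8, \ldots$ gives a total false-positive contribution of at most $\sum_{k\ge 1}1/(2^{\rho+1}\cdot 2^k) = 1/2^{\rho+1}$. Conversely, once $T \ge t$, the $\oh{\rho}$ majority-vote call to \divisionWithSparsity returns $f/g$ with probability $\ge 1 - 1/2^{\rho+1}$ by \cref{thm:divT}; together with the fact that the verifier of \cref{fact:verif} has no false negatives, this ensures correct termination at that iteration. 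Summing the two bounds gives total failure probability at most $1/2^{\rho}$.

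For complexity, conditioned on termination at some $T \le 2t$, each call to \divisionWithSparsity with bound $T$ costs $\softoh{(T + \#f + \#g)(\log D + \log H)}$ bit operations by \cref{thm:divT}, and the verification called with error $1/(2^{\rho+1}T)$, i.e., with parameter $\rho' = \rho + 1 + \log T$, costs $\softoh{(T + \#f + \#g)(\log D + \log H + \rho') + \rho'^{4}}$ by \cref{fact:verif}. Since the successive values of $T$ form a geometric sequence up to $2t$, both sums telescope to their largest term, yielding $\softoh{(t + \#f + \#g)(\log D + \log H + \rho) + \rho^{4}}$ after absorbing the $\polylog(t)$ and $(\log T)^4$ factors into $\softoh{}$. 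Adding both contributions gives the announced bit complexity.

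The main subtlety is the calibration of the verification error at iteration $T$: choosing it as $1/(2^{\rho+1}T)$ rather than a fixed $1/2^{\rho}$ is crucial so that the cumulative false-positive probability over all iterations still telescopes to $1/2^{\rho+1}$, while only logarithmically inflating the per-iteration verification cost. A second point I would verify with care is that the polylogarithmic overheads arising from repeated verifications and from the $\oh{\rho}$ majority voting in \divisionWithSparsity cleanly collapse to the single additive $\rho^{4}$ term and do not break the linear-in-$(t+\#f+\#g)$ dependency announced in the theorem; in particular, events in which the algorithm must run well beyond $T \approx t$ must occur with probability small enough to be absorbed in the $1/2^\rho$ failure bound without inflating the conditional running time.
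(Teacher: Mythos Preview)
Your proposal is correct and follows the same two-part decomposition as the paper: a union bound over the geometric sequence $T=2,4,8,\ldots$ to cap the cumulative false-positive probability of the verifier at $1/2^{\rho+1}$, combined with the amplified success guarantee of \cref{thm:divT} at the first $T\ge t$ to cap the probability of overshooting at another $1/2^{\rho+1}$; the complexity then follows by summing the per-iteration costs over a geometric progression in $T$.

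The one ingredient the paper invokes that you do not is \cref{lem:errorsize}. In the iterations with $T<t$, the candidate $h$ produced by \divisionWithSparsity is not $f/g$, and the cost of the verification in \cref{fact:verif} depends on the sparsity and height of this $h$, not on those of the true quotient. \Cref{lem:errorsize} is what guarantees that $h$ has at most $2T$ terms and height at most $T\cdot tH$, so that the verification cost in those early iterations is still $\softoh{(T+\#f+\#g)(\log D+\log H+\rho')+\rho'^4}$ as you claim. Without that lemma, your invocation of \cref{fact:verif} with the given $H$ is not justified; with it, your argument goes through exactly as written.
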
 
 
\begin{proof}
  The probability $1-\frac{1}{2^\rho}$ concerns both the correctness and the complexity of the algorithm. We prove that
  each of them holds independently with probability $\ge 1-\frac{1}{2^{\rho+1}}$.

  The algorithm is incorrect when $f\neq gh$. This happens if at some iteration, the candidate
    quotient is incorrect but the verification algorithm fails to detect it.  
  Since each verification fails with probability at most $\frac{1}{2^{\rho+1}T}$ and values of $T$ range over powers of two,
  the algorithm is correct with probability at least  $1-\frac{1}{2^{\rho+1}}$.

  For the complexity we first need to bound the number of iterations. Since the values of $T$ are powers of two, the
  first value $\ge \#(f/g)$ is at most $2\#(f/g)$. As soon as $T$ reaches this value, the return value is actually $f/g$
  with probability at least $1-\frac{1}{2^{\rho+1}}$ according to \cref{thm:divT} when the number of candidates is $\geq
  48(\rho+1)/\log e$. In that case, the test which is only one-sided error, succeeds and the algorithm returns $h=f/g$.
  That is, with probability at least $1-\frac{1}{2^{\rho+1}}$, the number of iterations is $\oh{\log \#(f/g)}$. Even
  with false sparsity, \cref{lem:errorsize} ensures that the size of the candidate quotients is at most quasi-linear in
  the size of the actual quotient. Therefore we can apply \cref{thm:divT} to obtain the claimed complexity with
  probability at least $1-\frac{1}{2^{\rho+1}}$.
\end{proof}

\section*{Acknowledgements}
We are grateful to the reviewers for their insightful comments.


\newcommand{\Gathen}{\relax}\newcommand{\Hoeven}{\relax}

\end{document}